\newcommand{\beginappendix}{%
        \setcounter{figure}{0}
        \renewcommand{\thefigure}{A\arabic{figure}}%
     }
\begin{document}



  \title{\bf A Spatial Modeling Approach for Linguistic Object Data:\\ Analysing dialect sound variations
    across Great Britain}
    \author[1]{Shahin Tavakoli}
    \author[2]{Davide Pigoli}
    \author[3]{John A. D. Aston\thanks{ The authors gratefully acknowledge support
    from EPSRC grant EP/K021672/2}\footnote{Address for correspondence: Professor
    John Aston, Statistical laboratory, Department of Pure Mathematics and
  Mathematical Statistics, University of Cambridge, CB3 0WB Cambridge, United
Kingdom. Email: j.aston@statslab.cam.ac.uk} }
    \author[4]{John S. Coleman}
    \affil[1]{Department of Statistics, University of Warwick}
    \affil[2]{Department of Mathematics, King's College London}
    \affil[3]{Statistical Laboratory, University of Cambridge}
    \affil[4]{Phonetics Laboratory, University of Oxford}

   \maketitle

\bigskip
\begin{abstract}

Dialect variation is of considerable interest in linguistics and other social sciences. However, traditionally it has been studied using proxies (transcriptions) rather than acoustic recordings directly. We introduce novel statistical techniques to analyse geolocalised speech recordings and to explore the spatial variation of pronunciations continuously over the region of interest, as opposed to traditional isoglosses, which provide a discrete partition of the region. Data of this type require an explicit modeling of the variation in the mean and the covariance. Usual Euclidean metrics are not appropriate, and we therefore introduce the concept of $d$-covariance, which allows consistent estimation both in space and at individual locations. We then propose spatial smoothing for these objects which accounts for the possibly non convex geometry of the domain of interest.
We apply the proposed method to data from the spoken part of the British National Corpus, deposited at
the British Library, London, and we produce maps of the dialect variation over Great Britain. In addition, the methods allow for acoustic reconstruction across the domain of interest, allowing researchers to listen to the statistical analysis.

\end{abstract}

\noindent%
{\it Keywords:} Functional data analysis, acoustic linguistics data, object data analysis, non-parametric smoothing, covariance matrices.


\section{Introduction}

A better understanding of local dialect variation is of interest both from the point of view of
linguistics (how languages evolved in the past, how they became differentiated and how they will develop in the future) and from that of social sciences and demography, the way language is used being both a result of social affiliations and a tool to shape group identification. Dialect variations have long been
studied in sociolinguistics by considering textual differences between phonetic transcriptions of the words
\citep[see, e.g.,][and references
therein]{kretszchmar1996quantitative,nerbonne2003introducing,nerbonne2011gabmap}. This focus on written
forms reflects a general normative approach towards languages: for cultural and historical reasons, the
way we think about them is focused on the written expression of the words, even when thinking of their
pronunciations. However, this is more a social artifact than a reality in the population, as there is
great variation even within a single region with a claimed ``homogeneous'' dialect. Indeed, the analysis of
speech data highlights that the definition of language is an abstraction that simplifies the reality of speech
variability and neglects the continuous geographical spread of spoken varieties, although this does not exclude the presence of some clearly defined boundaries.

In this paper, we develop techniques that explicitly complement this text-based approach; we define methodology
not for written or transcription based analysis, but rather by treating the acoustic data directly, by
considering sounds as data objects \citep[see][for a definition of data objects]{wang2007object}. This allows
the examination of all forms of variation, including those within groups usually deemed to be homogeneous. To
achieve this requires the development of spatially varying statistical models for object data which take into
account both the underlying geography, but also the statistical properties of the data (in this case the fact
that part of the model requires estimation of quantities which lie on a manifold). This leads to the
definition of a new concept of covariance which is statistically consistent over space even under Fr\'echet
type estimation.

We are particularly interested in using information from speech recordings to model the smooth variation of
speech characteristics over a geographical region. Since recordings are obtained only in a discrete set of
locations, the first step will be the development of a non-parametric smoothing procedure to infer speech
characteristics (and plausible speech reconstruction) on the continuous map. Having available replicates from
different speakers at each location, we are able to model both the mean and the covariance structure of the
speech process at that single location, the latter being highlighted in recent studies
\citep[see][]{aston2010linguistic,hadjipantelis2012characterizing} as an important feature for language
characterization. The model we use to smooth the speech process over the whole geographical region of interest
is described in Section \ref{sec:model-and-estimation}, with the model based on the concept of using data
specific metrics in the analysis.

From a statistical point of view, we develop the concept of spatial object data analysis, and, in particular,
the use of $d$-covariances, that is, covariances that are estimated under a different metric to the usual
Euclidean ($L^2$) one. It has been seen in a variety of applications, particularly diffusion tensor imaging,
that even when the use of Euclidean distance is appropriate (and in the case of defining geodesics it may well
not be so), it is often sub-optimal in terms of interpretability \citep[see for example][]{dryden2009}. This
is particularly important for the case of spatial smoothing with replicates, as use of the implied Euclidean
metric (as is the case for the sample covariance) is not consistent with a spatially smoothed version under
another metric, while the Euclidean metric is not valid with general smoothing techniques for positive
definite covariances. Thus a new type of covariance will be developed which is statistically consistent. The
analysis of the covariance structure is made possible by the presence of replicates of the
same sound, uttered by different speakers, in each geographical location. This is an uncommon setting for
spatial data analysis which is usually focused on problems where replicates are not available but second-order
stationarity can be assumed. The latter is also the setting where most of the recent work on spatial
statistics for object data have been developed \citep[see, e.g.,][]
{delicado2010,gromenko2012,menafoglio2016}. In this work, the need to model the spatial variation of the
covariance structure of the speech process led us to choose a non-parametric regression approach to estimate
both the mean and the $d$-covariance of the speech process, in the line of the methods developed for
interpolation and smoothing of positive definite matrices \citep{dryden2009,yuan2012} and for surface
smoothing over complex domains \citep{wood2008,sangalli2013}.

We also develop a set of tools to communicate relevant information to linguists. First, we generate colour
maps that reflect speech variation in the spirit of isogloss maps \citep[see, e.g.,][]{francis1959some, upton2013} but
with continuous variation (as opposed to hard boundaries) and using information from speech recordings (as
opposed to achieving this via phonetic transcriptions). Moreover, our method allows the resynthesis of a
plausible pronunciation for any point in the considered geographical region.  We include as supplementary
material a few examples of these reconstructed pronunciations for the sound data set described in Section
\ref{sec:sounds}.

The paper proceeds as follows. In Section~\ref{sec:sounds}, the principles behind using acoustic
recordings as the intrinsic data objects, as well as the data set itself, are introduced.
Section~\ref{sec:model-and-estimation} develops both the concept of $d$-covariance and the model for
spatial data objects based on the $d$-covariance formulation.  Section~\ref{sec:results} applies the
modeling framework to the British National Corpus data. This data set is a large corpus of acoustic
recordings of British English across Great Britain, making it ideal for the comparison of dialects and
accents. Finally, Section~\ref{sec:discussion} is a discussion of the work and both its linguistic and
statistical relevance. 
Details on the data preprocessing, and technical results concerning the $d$-covariance and the model
are given in the Appendix.


\section{Sounds As Data Objects}\label{sec:sounds}

In linguistics, there has recently been a considerable interest in assessing information coming directly from speech
recordings \citep[][]{lehmann2004data,tfp2012phylogenetic,pigoli2014distances,hadjipantelis2015unifying,coleman2015} in addition to textual evidence and phonetic transcriptions. While we develop new methodologies that can
be applied to a variety of languages and geographical regions, we consider, in particular, the variation of the English
language in the United Kingdom. British English is well known to contain a large number of regional dialects, which can have considerable differences between them. Dialect variation is investigated by analysing the spoken part of the British National Corpus (BNC) deposited at the British Library. The
digital versions of these recordings are now made available by the Phonetics Laboratory of the University of
Oxford \citep{coleman2012audio}. These sound data (rather than their phonetic transcriptions) will be directly used to explore British dialects. In particular, for the statistical analysis of speech tokens, it is first necessary to represent sounds in a time-frequency domain and align them in time to account for individual variation in speaking rate.
We choose here a Mel Frequency Cepstral Coefficients representation for the speech tokens because of its good
performance for speech resynthesis (which will be the final output of our analysis), and because it provides a
principled lower dimensional representation of the speech tokens. We now give a more detailed description of the underlying data and their mathematical representation.

\subsection{Sound Waves, Spectrograms, and Mel-Frequency Cepstral Coefficients}
\label{sec:mfcc}
  A one-channel monophonic sound can be represented by a time series $\left( s(t) : t = 1,\ldots, T \right)$, where
  $s(t)$ represents the recording of the air pressure at time $t$ as captured by the microphone. As such, a
  sound is the variation of air pressure over time. For $t \leq 0$ or $t > T$, we let $s(t) = 0$. We can
  therefore assume that $s(t)$ is well defined for $t \in \bZ$. An example of sound wave is given in
  Figure~\ref{fig:sound-wave}.

  The spectrogram of a sound $(s(t))_{t = 1, \ldots, T}$ is a two-dimensional representation $\spec{s}(t, \omega)$ of
  the sound, where $\spec{s}(t', \cdot)$ represents the modulus of the discrete Fourier transform of $s(t)$
  in a neighborhood of $t'$. Mathematically, if $W(x), x \in \bR$ in a window function with support $[-1,1]$,
  then for any positive integer $M, w_M(t) = W(2t/M), t \in \bZ$ is a window of width $M$, and
  \begin{equation*}
    \spec{s}(t, \omega) = \left| \sum_{u = 1}^T s(t - u) w_M(u) \exp(- i \omega u) \right|, \quad t =
    1,\ldots, T, \omega \in [0, 2\pi].
  \end{equation*}
  The function $u \mapsto s(t - u) w_M(u)$ is a windowed version of $s$ around $t$.
  For computational efficiency, the spectrogram is computed at the Fourier frequencies $\omega \in \left\{ 2 \pi
    k/N \right\}$, where $N \geq T$ is highly composite (usually a power of 2), using the fast Fourier transform
  \citep[FFT;][]{tukey:1965}. The window width $M$ is typically chosen to correspond to a segment of length
  ranging from $5$ to $20$ milliseconds ($M = 80$ or $320$ at 16Khz). From now on, we shall call the
  spectrogram of $s$ the $T \times N$ matrix with entries $ \spec{s}(t, \omega_k),  t = 1, \ldots, T,
  \omega_k = 2\pi k/N, k=0, \ldots, N-1$.

  A low dimensional time-frequency representation of the sound wave, often used in speech recognition and speech synthesis, are the \emph{Mel-frequency cepstral
    coefficients}, or \emph{MFCC}. The computation of the MFCC is done in two steps. First, the \emph{Mel
    spectrogram}, a filtered version
  of the spectrogram is computed,
  \begin{equation*}
    \melspec{s}(t, f) = \sum_{k=0 }^{N - 1} \spec{s}(t, 2 \pi k/N) b_{f, k}, \quad f = 0, \ldots, F,
  \end{equation*}
  where $(b_{f, k})_{k = 0, \ldots, N-1}, f = 0, \ldots, F$ is the so-called Mel-scale filter bank (an example
  of a Mel-scale filter bank is given in \citealt{Gold2011}) with $F$ filters,
  which is believed to mimic the human ear auditory system. Then, the MFCC corresponds to the
  first $M \leq F$ coefficients of the inverse Fourier transform of the Mel spectrogram:
  \begin{equation*}
    \mfcc{s}(t, m) = \frac{1}{F} \sum_{f = 0}^F \log\left( \melspec{s}(t,f)  \right) \exp\left[ i (2\pi (m-1)/(F+1))f \right], \quad
    m=1, \ldots, M.
  \end{equation*}
  An additional reason to prefer MFCC over spectrograms is that each coefficient is
  associated to a frequency band, and therefore the MFCCs are more robust to small
  misalignments in frequency when comparing multiple speakers or sounds. Since the
  MFCCs are assumed to be smooth in $t$, we shall from now on assume that $t \in
  [0,1]$, where it is implicitly assumed that the integer $t$ are replaced by $t/T$
  and interpolated. An example of MFCC is given in Figure~\ref{fig:sound-wave}.

  Note that there exist many modifications and variations of this definition of MFCC
  in the literature, as authors seek improvements in the performance of
  implemented speech recognition or parametric speech synthesis systems. Since one of
  the goals of this paper is the resynthesis of sounds after inference, we shall use
  the definition and computational implementation of the MFCC proposed in
  \citet{Erro2011,Erro2014} as it yields high-quality, natural sounding resynthesised
  speech. However, the underlying principles are the same. For simplicity, we will
  refer in the following to this modified version as MFCC.

  \begin{figure}[p]
    \begin{center}
      \includegraphics{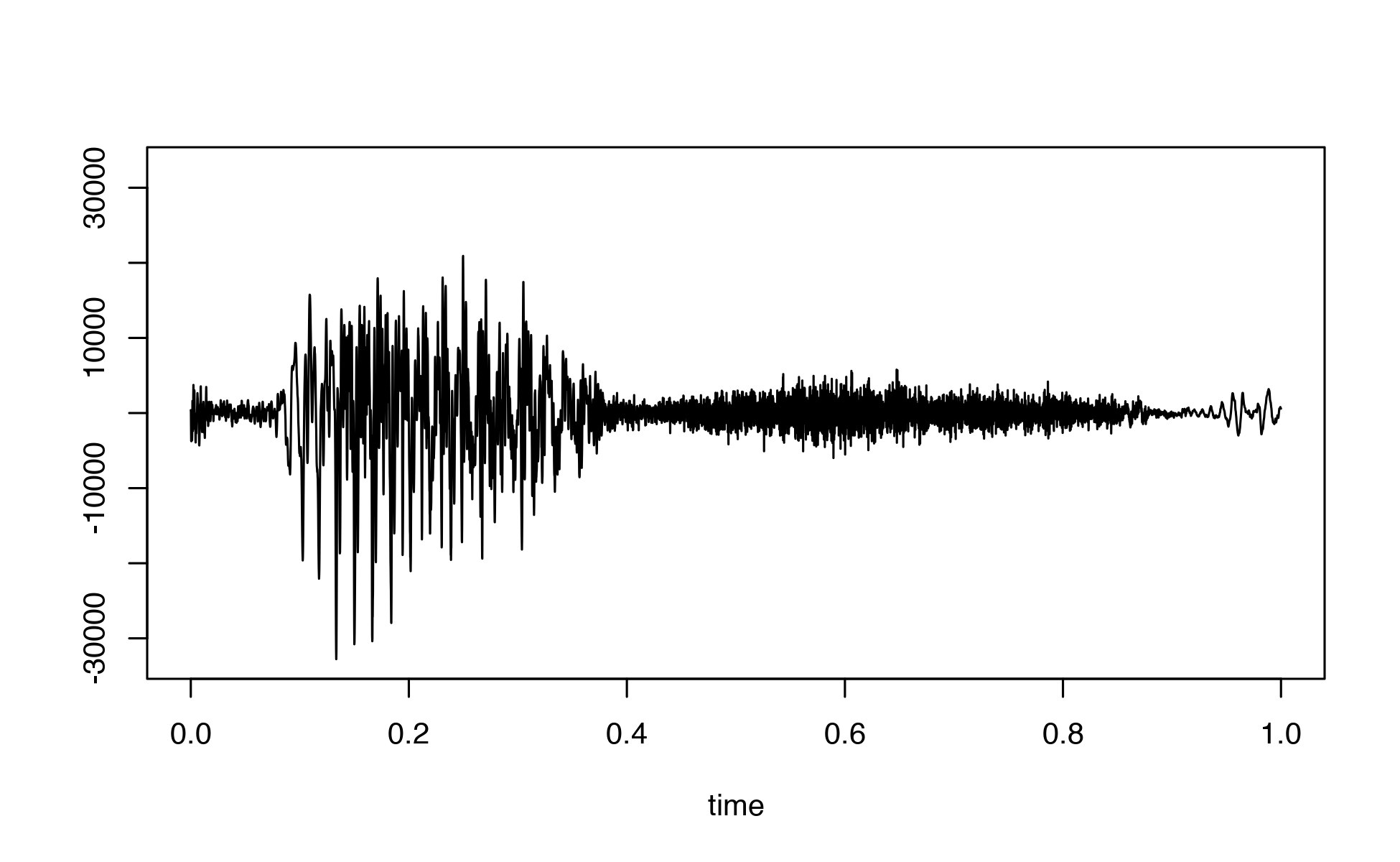}
      \includegraphics{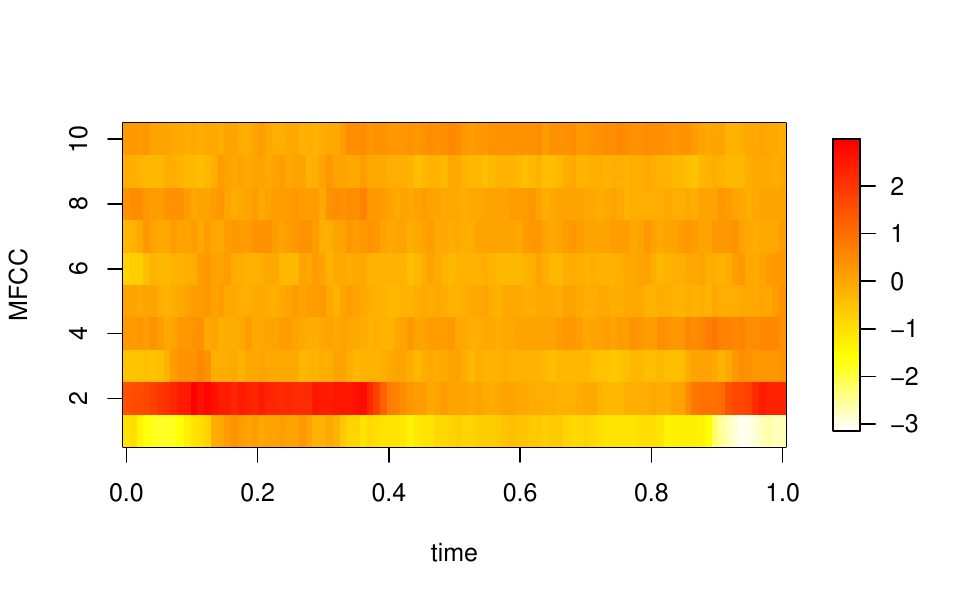}
    \end{center}
    \caption{Top figure: sound wave $s(t)$ of the word 'last'. Bottom figure: corresponding $\mfcc{s}(t,m)$. Both time
      scales have been normalized to $[0,1]$.}
    \label{fig:sound-wave}
  \end{figure}

\subsection{The British National Corpus}
\label{sec:bnc}

  The raw data consist of  the audio British National Corpus (BNC) recordings (131 GB of data, 16 Bit
  16 kHz one-channel .wav files, roughly 1100 hours of recording, publicly available at \url{http://www.phon.ox.ac.uk/AudioBNC}).
  These are mainly recordings of natural speech in typically noisy environments, with low recording amplitude
  (signal-to-noise ratio).
  Segmentation information about the words pronounced in the audio files were also provided (in TextGrid
  format), with the XML edition of the BNC (4.4 GB of files) containing transcriptions of the words spoken in the audio
  BNC recordings, along with contextual information (anonymized speaker identification, information about the speakers,
  location of the recording).

  For the purposes of the current paper, we restricted ourselves to the analysis of sounds of the vowel ``a''
  present in the
  following list of words:
  \begin{equation}
    \texttt{class, glass, grass, past, last, brass, blast, ask, cast, fast, pass}.
    \label{eq:list-of-words}
  \end{equation}
  The vowels in these words are pronounced in the same (geographically consistent) way and therefore we can
  consider them as the replicates of the same sound. We denote this as the ``class'' dataset. In Great Britain, this
  vowel is considered prototypical of the distinction between northern and southern accents: in the Midlands,
  North and South-West these words have a short, open front vowel [a] as in ``pat'', whereas in the South and
  South-East they have a long back vowel [\textipa{A}] (``aah''), similar to the vowel in ``part''.
The purpose of our work is the spatial analysis of sounds, and as such we needed to assign to each recorded
  sounds to the geographical location of the speaker's origin. We therefore removed sounds of speakers with
  missing or vague location information, and sounds corresponding to speakers who where trained to speak in
  a specified fashion (such as TV or radio presenters). For each speaker, we then used the corresponding
  recording location (the variable \texttt{placenamecleaned}) as a surrogate
  for the speaker's origin, provided this was unique. If there were multiple recording locations for a given
  speaker, the location corresponding to the \texttt{locale} variable ``home'' or ``at home'' was taken as the
  location of origin. If no such location existed, the sounds corresponding to the speaker were discarded. It
  should be noted that although we assume the speaker's accent to be representative of his recording location,
  there is unfortunately no data about the origin of the speakers to corroborate this assumption.

  After this process, we obtained  $4816$ sound tokens from $110$ distinct geographical locations in Great
  Britain, with $1993$ distinct speaker-vowel (from distinct word) combinations. About $46\%$ of the
  speaker-vowel combinations appear more than once, $61$ speaker-vowel combinations appear at least 10 times,
  and there is a speaker-vowel combination that is repeated $34$ times in the dataset. While the model we will use in Section \ref{sec:model-and-estimation} could be extended to a random effects type setup, to account for speaker repetition, we prefer to concentrate on the simpler model to aid understanding, particularly as most speaker vowel combinations only appear once.

  These vowel sounds were then transformed into MFCCs, with $M=10$. For each word $w$ in our list of words
  \eqref{eq:list-of-words}, we aligned the MFCCs of sounds corresponding to $w$ by registering their first
  coefficient (which corresponds to relative volume) using the Fisher-Rao metric \citep[R package \texttt{fdasrvf};
  see][]{srivastava2011registration,Tucker2013,Wu2014} and then extracted the segment associated to the vowel
  for each word, and linearly rescaled its time to the unit interval $[0,1]$.  Furthermore, we centered the first cepstral coefficients of all the data to remove
  differences in recording volume. For further information concerning the preprocessing and alignment, see the
  \supplement.

  We note that the duration of the vowel sounds is lost in the preprocessing step,
  and as such, we present a spatial map of the vowel durations in the \supplement
  (Figure~\ref{fig:sound-duration}). Indeed, the vowel duration is a one
  scalar summary of the sound of the vowel, which although useful does not capture considerable additional qualitative
  information contained in the vowel sound MFCCs, even after time alignment.

\section{Model and Estimation} \label{sec:model-and-estimation}

As mentioned earlier, previous works have identified the covariance structure between frequencies as an important feature of the
speech process that characterises languages
\citep{aston2010linguistic, hadjipantelis2012characterizing, pigoli2014distances, hadjipantelis2015unifying}. We
therefore have good reasons to expect the covariance between MFCCs---which are related to the energy in each
frequency band---to be associated with dialect characteristics, and we want to allow for it to vary
geographically. The investigation of the best metric for interpolation or extrapolation of covariance
matrices or operators has recently generated much work \citep{Arsigny2007,dryden2009,
  yuan2012,carmichael2013}. The use of a metric different from the Euclidean metric in the analysis leads to
the formulation of a more general concept of co-variability: the $d$-covariance. In the following, we explain
why we need to introduce this new concept and the role it plays in the definition of the model for speech
variation presented in Section~\ref{sec:model}.

\subsection{$d$-covariances}
\label{sec:d-cov}

Interpolation of covariance matrices under the usual Euclidean metric, although
yielding valid covariances, suffers from artifacts, such as swelling \citep[e.g.][]{arsigny2006log}. Extrapolation of
covariances under the Euclidean metric, on the other hand, is not even guaranteed to give valid covariances.
For this reason, several other metrics on the spaces of symmetric positive semi-definite matrices have been
studied, and have been shown to be useful for interpolation or extrapolation of covariances. 
For instance, the Euclidean average $\overline C = n^{-1} \sum_{i=1}^n C_i$ of covariance matrices $C_1,
\ldots, C_n$ can be reformulated as the solution to the variational problem
\[
  \min_{\Omega} \sum_{i = 1}^n d_E^2(\Omega, C_i),
\]
where $d_E$ denotes the Euclidean distance. In other words, $\overline C$ is the Fr\'echet mean of $C_1,
\ldots, C_n$ under $d_E$. Therefore the average of covariances $C_1, \ldots, C_n$ under another metric $d$ can be defined
as their Fr\'echet mean under $d$.

While covariance interpolation or extrapolation under various metrics is useful \citep[for example in the case of
spatial smoothing; see e.g.][]{yuan2012}, it is only valid when
treating the covariances $C_i$ as the observation units. However, the covariances $C_i$ are estimators of 
unknown true covariances, and have therefore an intrinsic estimation error. Since the true covariance of a
random vector $X \in \bR^p$ can be defined as the solution of the variational problem
\begin{equation*}
  \min_{\Omega} \ee d_E^2\left(\Omega, (X-\mu)(X-\mu)^\tp \right),
\end{equation*}
 the sample covariance can be viewed as an analogue sample-based variational problem \emph{based on the
  Euclidean metric $d_E$}. When using a
metric different than $d_E$ for spatial smoothing of sample covariances, a consistency problem arises due
to the two different metrics used in the variational problem and the smoothing problem, and the resulting
estimator is biased. A one-dimensional
example illustrating this is given in Section~\ref{sec:example-d-cov} of the \supplement.

For this reason, we introduce the concept of $d$-covariance that stems from recent developments on the
inference for covariance operators
\citep[see][]{arsigny2006log,dryden2009,kraus:2012,pigoli2014distances,petersen2016frechet} , where $d$ is a
metric on the space of $p \times p$ symmetric positive semi-definite matrices $\symmat_p$ that is used for the
spatial smoothing. The $d$-covariance of a random vector $X \in \bR^p$ is denoted $\cov_d(X)$, and defined by 
\begin{equation*}
  \cov_d(X) = \argmin_{\Omega \in \symmat_p} \ee{  d^2\left( ( X - \mu )(X - \mu)^\tp, \Omega \right) },
\end{equation*}
where $\mu = \ee X$, and provided the right-hand side is well defined.

In this paper, we shall use the square-root metric $d_S$ on the space of symmetric positive
semi-definite matrices,
defined by $d_S(B,C) = \HSnorm{ \sqrt{B} - \sqrt{C} }$,  where $\sqrt B$, also written
$B^{1/2}$, is the unique square root of $B$ (meaning that it is the unique
matrix $D$ that satisfies $D D = B$; see Section~\ref{sec:square-roots} of the
\supplement),
and $\HSnorm{\cdot}$ is the Frobenius norm. As can be seen in the one-dimensional example given in
Section~\ref{sec:example-d-cov} of the \supplement, using the same metric $d_S$ for both the definition
of the co-variation and the spatial smoothing yields an estimator that is less biased than the one obtained by
spatial smoothing of the usual (Euclidean) covariance with $d_S$.
Let $\rpnorm{\cdot}$ denote the Euclidean norm on $\bR^p$, i.e.\ $\rpnorm{x} = \sqrt{x^\tp x}$. The
following Proposition gives an explicit formula for the
$d_S$-covariance.
\begin{prop}
  \label{prop:d-S-cov-well-defined-and-explicit-formula}
  Let $X \in \bR^p$ be random element with $\ee \rpnorm{X} < \infty$ and mean $\mu =
  \ee X$. Then $\cov_{d_{S}}(X) = \eee{ \sqrt{ (X - \mu)(X - \mu)^\tp } }^2$.
\end{prop}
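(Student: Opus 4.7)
The plan is to reduce the variational problem to an ordinary projection problem in the Hilbert space of symmetric matrices equipped with the Frobenius inner product. Introduce the shorthand $Y = \sqrt{(X-\mu)(X-\mu)^\tp}$ and substitute $A = \sqrt{\Omega}$. Since the square-root map $\Omega \mapsto \sqrt{\Omega}$ is a bijection on $\symmat_p^+$ (cf.\ Appendix~\ref{app:square-roots}), minimizing over $\Omega \in \symmat_p^+$ is equivalent to minimizing
\[
F(A) \;=\; \ee \HSnorm{Y - A}^2
\]
over $A \in \symmat_p^+$, and then squaring. Expanding the Frobenius norm and using that $A$ is deterministic,
\[
F(A) \;=\; \ee \HSnorm{Y}^2 \;-\; 2\,\HSsc{\ee Y,\, A} \;+\; \HSnorm{A}^2.
\]

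Before minimizing, I need to justify that $\ee Y$ is a well-defined element of $\symmat_p$. This is the first checkpoint: since $(X-\mu)(X-\mu)^\tp$ is rank one with nonzero eigenvalue $\rpnorm{X-\mu}^2$, its unique positive square root $Y$ is rank one with eigenvalue $\rpnorm{X-\mu}$, so $\HSnorm{Y} = \rpnorm{X-\mu}$. The hypothesis $\ee\rpnorm{X}<\infty$ together with $\rpnorm{\mu}\le \ee\rpnorm{X}<\infty$ then gives $\ee\HSnorm{Y}<\infty$, so $\ee Y$ exists as a Bochner integral in $\symmat_p$. (If $\ee\rpnorm{X-\mu}^2=\infty$, the first term $\ee\HSnorm{Y}^2$ is infinite and one should interpret the $\argmin$ via the reduced objective $-2\HSsc{\ee Y, A} + \HSnorm{A}^2$, which is always finite; this does not affect the minimizer.)

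The remaining step is the minimization itself. View $F$ as a quadratic functional on the finite-dimensional real inner-product space $\symmat_p$. Writing $F(A) = \HSnorm{A - \ee Y}^2 + \text{const}$, the unique unconstrained minimizer over $\symmat_p$ is $A^\star = \ee Y$. To conclude I must verify that this unconstrained minimizer lies in the constraint set $\symmat_p^+$: since $Y \in \symmat_p^+$ almost surely and the PSD cone is closed and convex, the Bochner integral $\ee Y$ also belongs to $\symmat_p^+$. Hence $A^\star = \ee Y$ is the (unique) minimizer of $F$ on $\symmat_p^+$, and undoing the substitution gives
\[
\cov_{d_S}(X) \;=\; (A^\star)^2 \;=\; \bigl(\ee\sqrt{(X-\mu)(X-\mu)^\tp}\bigr)^2,
\]
as claimed.

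The only non-routine step is the second one, namely the integrability and positivity of $\ee Y$; once these are in hand the proof reduces to a one-line projection argument in $\symmat_p$. The rank-one structure of $(X-\mu)(X-\mu)^\tp$ is what makes a first-moment hypothesis on $X$ sufficient, which is perhaps the most interesting qualitative feature of the result.
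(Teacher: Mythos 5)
Your proof is correct and is essentially the argument the paper intends (the paper omits it as a direct calculation, but its remark that only a first moment is needed because $\HSnorm{\sqrt{XX^\tp}}=\rpnorm{X}$ points to exactly your reduction to a Frobenius projection onto the PSD cone). Your explicit treatment of the case $\ee\rpnorm{X-\mu}^2=\infty$ via the reduced objective is a worthwhile clarification of the paper's caveat that the variational problem be ``well defined.''
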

Notice in particular that we do not need second moments for the $d_S$-covariance to exist, which is due to the
fact that $\HSnorm{\sqrt{XX^\tp}} = \rpnorm{X}$.
Since there is an explicit formula for
the square-root of symmetric positive semi-definite matrices of rank one, namely $\sqrt{xx^\tp} =
xx^\tp/\rpnorm{x}$, for $x \in \bR^p, x \neq 0$, we can rewrite the $d_S$-covariance of $X$ as
\begin{equation}
  \cov_{d_S}(X) = \eee{ \frac{(X - \mu)(X - \mu)^\tp}{\rpnorm{X- \mu}}}^2,
  \label{eq:ds-cov-as-regularized-cov}
\end{equation}
where the expression inside the expectation is understood to be equal to zero if $X =
\mu$. The denominator in \eqref{eq:ds-cov-as-regularized-cov} reveals that the square-root of the
$d_S$-covariance can be viewed as a regularized version of the usual covariance.
Furthermore, it also reveals that unlike the Euclidean covariance, the $d_S$-covariance does not behave in the
usual way under linear transformations: $\cov_{d_S}(AX)  \neq A \cov_{d_S}(X) A^\tp$ for general linear
transformations $A$. However, if we introduce the new families of square-root semi-metrics $d_{S, A}(C,D) =
\HSnorm{ \sqrt{ACA^\tp} - \sqrt{ADA^\tp}}$, where $C,D$ are $p \times p$ symmetric positive semi-definite
matrices and $A$ is a $n \times p$ matrix, we have the following result, proved in
Section~\ref{sec:technical-results} of the \supplement.
\begin{prop}
  \label{prop:d-covariance-and-linear-transformations}
  Let $A$ be a $n \times p$  matrix, and $X \in \bR^p$ be a random element with $\ee \rpnorm{X} < \infty$.
   Then $\cov_{d_S}(AX) = A \cov_{d_{S, A}}(X) A^\tp$.
   In the special case where $A^\tp A = I$, the identity matrix, we have
   $\cov_{d_S}(AX) = A \cov_{d_S}(X) A^\tp$.
\end{prop}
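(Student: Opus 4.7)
The plan is to reduce the $d_{S,A}$-variational problem to the $d_S$-variational problem via the substitution $\widetilde{\Omega} = A \Omega A^\tp$, and then invoke the preceding proposition. Write $Z := X - \mu$, so that $AX - \ee AX = A Z$, and note that $\ee \rpnorm{AX} < \infty$ since $\rpnorm{AX}$ is dominated by a constant multiple of $\rpnorm{X}$.

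Applying the preceding proposition to the random vector $AX \in \bR^n$ gives
\begin{equation*}
\cov_{d_S}(AX) = \left( \ee \sqrt{A Z Z^\tp A^\tp} \right)^2.
\end{equation*}
On the other hand, by the definition of $d_{S,A}$, the Fr\'echet functional defining $\cov_{d_{S,A}}(X)$ is
\begin{equation*}
F(\Omega) := \ee \, d_{S,A}^2( Z Z^\tp, \Omega ) = \ee \HSnorm{ \sqrt{A Z Z^\tp A^\tp} - \sqrt{A \Omega A^\tp} }^2.
\end{equation*}
Adding and subtracting the deterministic matrix $\ee \sqrt{A Z Z^\tp A^\tp}$ inside the squared Frobenius norm, the cross term vanishes upon taking expectation, yielding a bias--variance decomposition
\begin{equation*}
F(\Omega) = C_0 + \HSnorm{ \ee \sqrt{A Z Z^\tp A^\tp} - \sqrt{A \Omega A^\tp} }^2,
\end{equation*}
where $C_0$ does not depend on $\Omega$. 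Since the matrix square root is a bijection on the positive semi-definite cone, minimizing $F$ over $\Omega \in \symmat_p$ is equivalent to solving $A\Omega A^\tp = \left(\ee \sqrt{A Z Z^\tp A^\tp}\right)^2 = \cov_{d_S}(AX)$.

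It remains to exhibit an $\Omega^\star \in \symmat_p$ solving this equation. Each realisation $\sqrt{A Z Z^\tp A^\tp} = A Z (A Z)^\tp / \rpnorm{A Z}$ is a rank-one PSD matrix whose range lies in the column span of $A$; by linearity the same holds for $W := \ee \sqrt{A Z Z^\tp A^\tp}$ and therefore for $W^2 = \cov_{d_S}(AX)$. Setting $\Omega^\star := A^{+} W^2 (A^{+})^\tp$, with $A^+$ the Moore--Penrose pseudo-inverse of $A$, produces a symmetric PSD matrix, and since $A A^{+}$ is the orthogonal projection onto the column span of $A$ (and thus fixes any matrix with range in that span), we obtain $A \Omega^\star A^\tp = (A A^{+}) W^2 (A A^{+})^\tp = W^2$, so $\Omega^\star$ is a minimizer and the identity follows.

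The main subtlety---which the statement implicitly finesses---is that the argmin need not be unique when $A$ has a non-trivial kernel, because any perturbation $\Delta$ with $A \Delta A^\tp = 0$ leaves $F$ unchanged; however, all minimizers agree after conjugation by $A$, so the identity $A \cov_{d_{S,A}}(X) A^\tp = \cov_{d_S}(AX)$ is unambiguous.
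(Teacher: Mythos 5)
Your proposal is correct and follows essentially the same route as the paper: compute $\cov_{d_S}(AX)$ from the explicit rank-one square-root formula, then identify the minimizer of the $d_{S,A}$ Fr\'echet functional and conjugate by $A$ (the paper compresses this second step into a ``straightforward calculation''). Your additional care about existence and non-uniqueness of the minimizer when $A$ has a non-trivial kernel --- resolved via the Moore--Penrose pseudo-inverse and the observation that all minimizers agree after conjugation by $A$ --- is a genuine point the paper leaves implicit, and it is handled correctly.
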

This means that the $d_S$-covariance of a linear transformation of $X$ is given by a
transformation of the $d$-covariance of $X$ under a metric related to the linear
transformation.
In particular, the entries of a $d_S$-covariance do not correspond to the $d_S$-covariance of corresponding entries of the random vector. This is analogous to partial correlation.
Furthermore,
Proposition~\ref{prop:d-covariance-and-linear-transformations} tells us that the
$d_S$-covariance is rotation equivariant.
Note also that the $d_S$-covariance is a measure of spread, and that other measure of
spreads have been proposed for multivariate data or functional data
\citep{Locantore1999a,Gervini2008,kraus:2012}, motivated from a robustness
perspective.

\subsection{A model for spatially varying speech object data} \label{sec:model}

We are now ready to define the model for speech variation for the analysis of
dialect data. We wish to have a model which can spatially vary both in terms of a
mean function but also in terms of covariance, as we will have replicates at
individual spatial locations. We therefore assume the following model:
\begin{equation}
  \label{eq:model}
  Y_{lj}(t) = m(X_l, t) + \varepsilon_{lj}(t), \quad l = 1,\ldots, L; j
  = 1,\ldots, n_l,
\end{equation}
where $Y_{lj}(t) \in \bR^p$ is the vector of the first $p$ Mel-frequency
cepstral coefficient (MFCC) at time $t \in [0,1]$ of the recording $lj$, $X_l$ corresponds to the spatial
location of the observations $Y_{lj}, j=1,\ldots, n_l$, recorded in latitude/longitude coordinates, i.e.\ $X_l
\in \england$, where $\england \subset \latlonspace$ is the spatial domain, and will denote Great Britain in the application of
Section~\ref{sec:results}.
The \emph{spatial MFCC} is the function $x \mapsto m(x, \cdot) \in \mfccspace{p}$,
mapping a spatial location $x \in \england$  to its corresponding mean MFCC.

The term $\varepsilon_{lj} \in \mfccspace{p}$ is an error term. We assume that for each $l=1,\ldots, L$, $\varepsilon_{lj} \simiid
\varepsilon(X_l), j = 1,\ldots, n_l$, and that the $\varepsilon_{lj}s$ are all
independent. Indeed, this is a valid assumption since we have replicates for each location $X_l$, and a
scatterplot of the pairwise distances between the errors against their geographical distances does not reveal any spatial
dependence (see Figure~\ref{fig:scatterplot} of the \supplement).
  The process $\varepsilon(\cdot) : \england \rightarrow \mfccspace{p}$ is assumed to have mean
zero, $\ee \varepsilon = 0$, and we denote its $d_S$-covariance by $\Omega(x,t) =
\cov_{d_S}(\varepsilon(x, t))$, where we write $\varepsilon(x,t)$ for $\varepsilon(x)(t)$.
This implies in particular that $\cov_{d_S}(Y_{lj}(t)) = \Omega(X_l, t)$.
While traditionally $\Omega(X_l, t)$ would be defined as the covariance matrix of $Y_{lj}$, by assuming
that $\ee \varepsilon_{lj}(t) = 0$ for all $t$ and $\eee{ \varepsilon(t)
\varepsilon(t)^\tp }$ is the identity, we define here $\Omega(X_l, t)$ to be the
$d_S$-covariance of $Y_{lj}$,  where $d_S$ is the square-root metric, because we shall be smoothing spatially
using the metric $d_S$.
Recalling that $\symmat_p \subset \bR^{p \times p}$ is the space of symmetric positive semi-definite $p \times
p$ real matrices, the function $x \mapsto \Omega(x,\cdot) \in L^2([0,1], \symmat_p)$, maps a spatial location
$x \in \england$ to a time-varying symmetric positive semi-definite matrix at that location.

Given the observations $\{Y_{lj}(t),X_l\} $, we want to estimate a smooth field
$\widehat{m}(x, t) $ for the mean of the speech process, and a smooth field
$\widehat{\Omega}(x,t)$ for the (time-dependent) $d_S$-covariance between MFCCs coefficients.

\subsection{Estimation of the mean MFCCs field} 
\label{sec:mean-mfcc-field-estimation}

In this section, we will be dealing with the estimation of the mean MFCC field $m$, and
therefore the natural metric in this case to consider is the Euclidean ($L^2$)
distance. However, when we consider the geographical distance, the natural metric is the geodesic distance,
which we will approximate by graph distance $\graphdist(\cdot, \cdot)$ on a constructed triangular mesh over the region of interest.

We propose to fit the mean MFCC field using a local constant estimator which minimizes
a weighted mean square fit criterion. Let $K : \bR \rightarrow [0, \infty)$ denote
a continuous and  bounded density function,  and let
$K_h(s) = K(s/h)/h^2$.
At the location $x$, the estimate
of the mean MFCC is $\hat m(x) \in L^2\left( [0,1], \bR^p \right)$ which minimizes
\begin{equation}
  \sum_{l = 1}^L \sum_{j = 1}^{n_l} K_h\left( \graphdist(x, X_l) \right)
  \frac{ \hnorm{Y_{lj} - \hat m(x) }^2 }{ \hat \sigma^2(X_l)},
  \label{eq:fit-criterion-mean-field}
\end{equation}
where $\hnorm{\cdot}$ is the usual norm in $L^2\left( [0,1], \bR^p \right)$,
i.e.\ $\hnorm{f}^2 = \int_0^1 \rpnorm{f(t)}^2 dt$ for $f \in L^2\left( [0,1], \bR^p \right)$,
and
$\graphdist(x, X_l)$ is the distance on the map between $x$ and $X_l$. The denominator is a
normalizing factor that compensates for possible heteroscedasticity in the MFCC
field using the total variability of the residuals,
$\hat \sigma^2(X_l)= n_l^{-1} \sum_{j = 1}^{n_l} \hnorm{ Y_{lj} - \overline Y_l }^2$,
 where $\overline Y_l = n_l^{-1} \sum_{j = 1}^{n_l} Y_{lj}$.
The minimizer of the fit criterion
\eqref{eq:fit-criterion-mean-field} is a
Nadaraya--Watson type estimator, given by convex combination of the average MFCCs at
each location, i.e.\
\begin{equation}
  \hat m(x) = \sum_{l = 1}^L w_l(x) \overline Y_l,
  \label{eq:estimated-mean-field}
\end{equation}
where
\begin{equation*}
w_l(x) = \tilde w_l(x) / \sum_{l' = 1}^L \tilde w_{l'}(x) \quad \& \quad
\tilde w_l(x) = n_l \cdot K_h\left( \graphdist(x, X_l) \right) /\hat \sigma^2(X_l)
\end{equation*}


Possible strategies for the choice of the bandwidth $h$ are discussed in Section
\ref{sec:choice-smoothing-parameters}. 
It may be argued that using a higher order local polynomial
estimator in place of \eqref{eq:fit-criterion-mean-field} can reduce the bias of the estimator, and there
exists methods to perform local linear smoothing when only pairwise distances between the covariates are
available \citep{baillo2009local,boj2010distance,boj2016global}. We leave this extension as a future avenue of
research.


\subsection{$d_S$-Covariance Field Estimation} 
In this section, we extend the kernel smoother to estimate the smooth $d_S$-covariance field $\Omega$. The
natural metric to be used for the smoothing in this case is the square root metric $d_S$ as this indeed avoids
inconsistencies between estimation of the $d_S$-covariance in the observed locations and estimation of the
spatially smooth field, as we show in Section \ref{sec:consistency}. Moreover, the square root metric is
well-defined for singular matrices, a property that will be needed for the application to the BNC data in Section
\ref{sec:results}. Indeed, locations with small number of observations are expected to have
$d_S$-covariance between MFCCs that are not full rank. We also propose to use a locally constant estimator of
the covariance field to allow for the non-convex domain, as discussed in the previous section.

At the point $x$, the estimated covariance $\hat \Omega(x, \cdot) \in L^2\left( [0,1],  \symmat_p \right)$ is the
minimizer of the following fit criterion:
\begin{equation}
  \sum_{l = 1}^L  K_h\left( \graphdist(x, X_l) \right)
  \int_0^1 d_S^2(\breve \Omega_l(t), \hat \Omega(x, t)) dt,
  \label{eq:fit-criterion-cov}
\end{equation}
where $h$ is a smoothing parameter,
and $\breve \Omega_l \in L^2\left( [0,1], \symmat_p \right)$ is the sample $d_S$-covariance at location $X_l$,
defined as
\begin{align*}
  \breve \Omega_l(t) &= \argmin_{\Omega \in \symmat_p} \: \frac{1}{n_l} \sum_{i = 1}^{n_l} d_S^2 \left(\Omega, (Y_{li}(t) - \overline
    Y_l(t))(Y_{li}(t) - \overline Y_l(t))^\tp \right), \quad \text{for each } t \in [0,1].
  \\ &= \left[ \frac{1}{n_l} \sum_{i=1}^{n_l} \sqrt{ (Y_{li}(t) - \overline Y_l(t)) (Y_{li}(t) - \overline Y_l(t))^\tp }
  \right]^2
\end{align*}
It is not difficult to show (see Lemma~\ref{lma:soln-to-ds-cov-fit-criterion} of the \supplement) that the minimizer of \eqref{eq:fit-criterion-cov} is given by
\begin{equation}
  \hat \Omega(x,t) = \left[ \sum_{l = 1}^L w_l(x) \sqrt{\breve \Omega_l(t) }\right]^2,
  \label{eq:cov-est}
\end{equation}
where
\begin{equation}
w_l(x) = \tilde w_l(x) / \sum_{l' = 1}^L \tilde w_{l'}(x) \quad \& \quad
\tilde w_l(x) = K_h\left( \graphdist(x, X_l) \right) .
  \label{eq:nadaraya-watson-weights-cov-field}
\end{equation}
Equation \eqref{eq:cov-est} reveals that $\hat \Omega(x)$ is the square of a Nadaraya--Watson estimator in the
square-root space.


\subsection{Consistency of Smoothing with the Square Root Distance}
\label{sec:consistency}

In this section, we study the properties of the estimator for the $d_S$-covariance smooth field $\Omega(x,t)$.
This is a non-standard smoothing problem, which poses a few theoretical challenges due to the
non-Euclidean metric involved, and to the fact that we want to control the estimation error uniformly in the
time index. Moreover, this gives us the opportunity to show how it is possible to account
for
the use of the geographical distance $\graphdist$ in the kernel smoothing. The estimator for the mean field
$m(x,t)$ uses the Euclidean ($L^2$) metric and its properties can therefore be studied using similar arguments, in
particular using the results in Section~\ref{sec:technical-results} of the
\supplement.

This first result, proved in Section~\ref{sec:technical-results} of the
\supplement, shows that under mild assumptions, the sample $d_S$-covariance is a $\sqrt{n}$-consistent
estimator of the $d_S$-covariance.
\begin{prop}
  \label{prop:rate-of-convergence-sample-ds-cov}
  Let $Y_1, \ldots, Y_n \simiid Y \in \bR^p$ be random vectors with $\mu = \ee Y$ and
  $\ee \rpnorm{Y}^2 < \infty$. Let $\overline Y = (Y_1 + \cdots + Y_n)/n$, and
  \begin{align*}
    \breve \Omega &= \left( \frac{1}{n} \sum_{i = 1}^n \sqrt{ (Y_i - \overline Y)(Y_i -
\overline Y)^\tp } \right)^2,
  \end{align*}
  this being the explicit expression for the sample $d_S$-covariance.
  Then $d_S( \breve \Omega, \cov_{d_S}(Y) ) \leq \kappa_p \sqrt{ \ee \rpnorm{Y - \mu}^2/n }$, where $\kappa_p$ is a constant depending only on the dimension.
\end{prop}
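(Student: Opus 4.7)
The plan is to write $\sqrt{\breve\Omega}$ and $\sqrt{\cov_{d_S}(Y)}$ in their explicit rank-one-sum forms using $\sqrt{xx^\tp} = xx^\tp/\rpnorm{x}$, and then control $d_S(\breve\Omega, \cov_{d_S}(Y)) = \hnorm{\sqrt{\breve\Omega} - \sqrt{\cov_{d_S}(Y)}}$ in the Frobenius norm by a two-term decomposition that separates the centering error from the sample-average fluctuation error.

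More precisely, set $U_i = \sqrt{(Y_i - \overline Y)(Y_i - \overline Y)^\tp}$ and $V_i = \sqrt{(Y_i - \mu)(Y_i - \mu)^\tp}$, so that $\sqrt{\breve\Omega} = n^{-1}\sum_i U_i$ (each summand is positive semi-definite, hence the sum is its own square root of its square) and $\sqrt{\cov_{d_S}(Y)} = \ee V_1$. Inserting $V_i$ and using the triangle inequality yields
\begin{equation*}
\HSnorm{\sqrt{\breve\Omega} - \sqrt{\cov_{d_S}(Y)}} \leq \HSnorm{\frac{1}{n}\sum_{i=1}^n (U_i - V_i)} + \HSnorm{\frac{1}{n}\sum_{i=1}^n V_i - \ee V_1}.
\end{equation*}
The second term is the deviation of a sample mean of i.i.d.\ matrix-valued summands, and since $\HSnorm{V_i}^2 = \trace{(Y_i-\mu)(Y_i-\mu)^\tp} = \rpnorm{Y_i - \mu}^2$, a direct coordinatewise variance computation gives $\ee\HSnorm{n^{-1}\sum_i V_i - \ee V_1}^2 \leq \ee\rpnorm{Y-\mu}^2/n$.

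For the first term, the main lemma I would prove separately is a Lipschitz-type estimate of the form
\begin{equation*}
\HSnorm{\sqrt{xx^\tp} - \sqrt{yy^\tp}} \leq C\, \rpnorm{x - y}, \qquad x, y \in \bR^p,
\end{equation*}
with $C$ an absolute constant. The argument writes $xx^\tp/\rpnorm{x} = \rpnorm{x}\, uu^\tp$ with $u = x/\rpnorm{x}$, splits the difference as $\rpnorm{x}(uu^\tp - vv^\tp) + (\rpnorm{x} - \rpnorm{y})vv^\tp$, bounds $\HSnorm{uu^\tp - vv^\tp} \leq \sqrt{2}\,\rpnorm{u - v}$ by a direct inner-product computation, and uses $\rpnorm{x}\rpnorm{u - v} \leq 2\rpnorm{x - y}$. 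Applied with $x = Y_i - \overline Y$ and $y = Y_i - \mu$, this gives $\HSnorm{U_i - V_i} \leq C\rpnorm{\overline Y - \mu}$ uniformly in $i$, hence $\HSnorm{n^{-1}\sum_i(U_i - V_i)} \leq C\rpnorm{\overline Y - \mu}$, and Cauchy--Schwarz gives $\ee\rpnorm{\overline Y - \mu} \leq \sqrt{\ee\rpnorm{Y-\mu}^2/n}$.

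Combining the two bounds via triangle inequality yields the claim with $\kappa_p$ an absolute constant (the subscript $p$ is kept for generality, since the map $A \mapsto \sqrt A$ has genuinely dimension-dependent regularity on general positive semi-definite matrices even though rank-one terms are well-behaved). The main obstacle is the Lipschitz estimate for $x \mapsto \sqrt{xx^\tp}$: although $\rpnorm{x} = 0$ is a singularity of the unit-vector decomposition, the rank-one formula $xx^\tp/\rpnorm{x}$ is continuous at the origin in Frobenius norm, and the splitting above handles this cleanly provided one checks the edge cases $x = 0$ or $y = 0$ separately using the identity $\HSnorm{\sqrt{xx^\tp}} = \rpnorm{x}$.
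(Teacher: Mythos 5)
Your proposal is correct and follows essentially the same route as the paper's proof: the same decomposition into a centering error (replacing $\overline Y$ by $\mu$ inside the square roots) plus a sampling-fluctuation error for the i.i.d.\ terms $\sqrt{(Y_i-\mu)(Y_i-\mu)^\tp}$, with the centering error controlled by the Lipschitz property of $x \mapsto \sqrt{xx^\tp}$ that the paper asserts for its map $\phi_x$. The only difference is that you spell out the Lipschitz estimate and replace the paper's appeal to the central limit theorem by an explicit variance computation, which in fact yields the stated constant $\sqrt{\ee\rpnorm{Y-\mu}^2/n}$ more directly than the paper's own $\bigop(n^{-1/2})$ argument.
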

In particular, $ \breve \Omega = \cov_{d_S}(Y) + \bigop(n^{-1/2})$.
We now introduce some conditions used in proving the consistency of the smooth $d_S$-covariance field.
\begin{cond}
  \label{cond:kernel-and-graph-distance}
  \mbox{}

  \begin{enumerate}[(1)]
    \item \label{cond:kernel} The kernel $K: \mathbb R \rightarrow [0, \infty)$ is a continuous probability density,
       with $\int_{0}^\infty s^3 K(s) ds < \infty$.
      Assume also that $K$ is decreasing, i.e.\ $0 \leq s \leq t \implies K(s) \geq K(t)$.
    \item \label{cond:graphdist-equiv-to-euclidean} There exists constants $0 < c_1 < c_2$ such that $c_1 \rpnorm{x-y} \leq
      \graphdist(x,y) \leq c_2 \rpnorm{x-y}$.
  \end{enumerate}
\end{cond}
Condition~\ref{cond:kernel-and-graph-distance} (\ref{cond:kernel}) is a standard condition on the kernel function, which is
in particular satisfied by the Gaussian kernel we use in Section~\ref{sec:results}.
Condition~\ref{cond:kernel-and-graph-distance} (\ref{cond:graphdist-equiv-to-euclidean}) states that the graph distance is (metric) equivalent to the
Euclidean distance on $\england$.
The following condition on the sampling density is standard.
\begin{cond}
  \label{cond:density}
  The density of the observation locations $X_1, \ldots, X_L \in \england$,  $f : \england \rightarrow
    \bR$ is continuous, and $\sup_{x \in \england} f(x) < \infty.$
\end{cond}
Recall that $\Omega(x,t) = \cov_{d_S}(\vep(x, t))$. We are going to assume the following regularity
conditions on the error process $\vep$.
\begin{cond}
  \label{cond:consistency-dcov-smoothing}
  \mbox{}

  \begin{enumerate}[(1)]
  \item $n_l \geq c_0 n$ for all $l = 1,\ldots, L$ for some $c_0 > 0$.
  \item $\sqrt{\Omega(\cdot,t)} : \england \rightarrow
  \symmat_p $ is $C^1$ (with respect to the Hilbert--Schmidt norm), and \\$\sup_{x \in \england, t \in
    [0,1]}\rpnorm{ \frac{\partial \left[ \sqrt{\Omega(x,t)} \right]_{rs}}{ \partial
      x}(x) } < \infty$, where $[A]_{rs}$ is the $rs$-th entry of the matrix $A$.
    \item $\sup_{x \in \england, t \in [0,1]} \ee{ \rpnorm{\vep(x, t) }^2 } <
      \infty$.
  \end{enumerate}
\end{cond}
Condition~\ref{cond:consistency-dcov-smoothing} (1) states that asymptotically, the number of observations per
locations is of the same order.
Condition~\ref{cond:consistency-dcov-smoothing} (2) is a (pointwise in time) smoothness condition on the
$d_S$-covariance field.
Condition~\ref{cond:consistency-dcov-smoothing} (3) assumes that the second moment of the error field is
uniformly bounded. The second moment is needed to establish the rate of convergence, whereas the uniform bound
is for the control of the smoothing error uniformly in time.

We can now state the result on the consistency of the smoothed $d_S$-covariance field, whose proof is in
Section~\ref{sec:technical-results} of the
\supplement.
\begin{thm}
  \label{thm:consistency}
  Assume model \eqref{eq:model} with conditions~\ref{cond:kernel-and-graph-distance}, \ref{cond:density} and
  \ref{cond:consistency-dcov-smoothing} holds, and $L \rainf, h \raz, Lh \rainf, n \rainf$.
  Then, for any $x \in \england$ in the interior of $\england$ such that $f(x) > 0$, we have
  \begin{equation*}
    \ee_X d_{S}(\hat \Omega(x,t), \Omega(x,t))  = \bigop(n^{-1/2}) + \bigop\left( \sqrt{h^2 +
        \frac{1}{nLh^2}} \right),
  \end{equation*}
  where the stochastic term is uniform in $t$, and $\ee_X$ is the expectation conditional on $X_1, \ldots,
  X_L$.
\end{thm}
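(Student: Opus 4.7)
The strategy is to exploit the closed form \eqref{eq:cov-est} to reduce the problem to a Nadaraya--Watson analysis in the \emph{square-root space}, where the $d_S$ metric is the Frobenius norm. Since each $\sqrt{\breve\Omega_l(t)}$ is PSD and the weights $w_l(x)\geq 0$ with $\sum_l w_l(x)=1$, we have $\sqrt{\hat\Omega(x,t)} = \sum_l w_l(x)\sqrt{\breve\Omega_l(t)}$, hence $d_S(\hat\Omega(x,t),\Omega(x,t)) = \|\sqrt{\hat\Omega(x,t)} - \sqrt{\Omega(x,t)}\|$. Writing $\Omega_l(t) := \Omega(X_l,t)$, the triangle inequality splits the error as $d_S(\hat\Omega(x,t),\Omega(x,t)) \leq A(x,t) + B(x,t)$, with the stochastic term $A(x,t) = \|\sum_l w_l(x)[\sqrt{\breve\Omega_l(t)} - \sqrt{\Omega_l(t)}]\|$ and the smoothing bias $B(x,t) = \|\sum_l w_l(x)\sqrt{\Omega_l(t)} - \sqrt{\Omega(x,t)}\|$. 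These two terms are attacked by different tools: Proposition~\ref{prop:rate-of-convergence-sample-ds-cov} for the quality of each $\breve\Omega_l$ at its own location, and standard kernel smoothing for the averaging.

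For the stochastic term $A$, I would first note that $m(X_l,t)$ cancels in $Y_{li}-\bar Y_l$, so $\breve\Omega_l(t)$ is exactly the sample $d_S$-covariance of the i.i.d.\ errors $\vep_{li}(t)$. Proposition~\ref{prop:rate-of-convergence-sample-ds-cov} combined with Condition~\ref{cond:consistency-dcov-smoothing}(3) gives $\ee_X d_S(\breve\Omega_l(t),\Omega_l(t)) \leq \kappa_p\sqrt{M/n_l}$ uniformly in $l$ and $t$, where $M = \sup_{x,t}\ee\rpnorm{\vep(x,t)}^2$. Using $\sum_l w_l=1$ and Condition~\ref{cond:consistency-dcov-smoothing}(1) gives the coarse triangle bound $\ee_X A(x,t)=O(n^{-1/2})$ uniformly in $t$. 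Independence of the summands across $l$, together with the classical kernel-weight estimate $\sum_l w_l(x)^2 = O_p(1/(Lh^2))$ at an interior $x$ with $f(x)>0$ (valid under Conditions~\ref{cond:kernel-and-graph-distance} and~\ref{cond:density} by writing the sum as an empirical approximation to $\int K_h^2$), then gives the finer variance-type bound $\ee_X A(x,t)^2 = O_p(1/(nLh^2))$, contributing the $1/\sqrt{nLh^2}$ term.

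For the bias term $B(x,t)$, which is deterministic given $X$, Condition~\ref{cond:consistency-dcov-smoothing}(2) combined with the metric equivalence in Condition~\ref{cond:kernel-and-graph-distance}(2) gives $\|\Omega_l(t)-\Omega(x,t)\| \leq C\graphdist(X_l,x)$ uniformly in $t$. Transferring this Lipschitz control through the square-root map and combining with the effective radius $O(h)$ of the kernel support (from Condition~\ref{cond:kernel-and-graph-distance}(1)) and the standard local-constant bias calculation at an interior point of $\england$ under Condition~\ref{cond:density} yields $B(x,t)=O_p(h)$. Combining $\ee_X d_S \leq \ee_X A + B$ and using whichever of $O(n^{-1/2})$ or $O_p(1/\sqrt{nLh^2})$ is more convenient for $\ee_X A$, together with $B=O_p(h)$, gives the stated rate $O_p(n^{-1/2}) + O_p(\sqrt{h^2 + 1/(nLh^2)})$, uniformly in $t$.

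The main obstacle is the step transferring smoothness of $\Omega$ through the non-linear square root: the naive Powers--St{\o}rmer inequality $d_S(A_1,A_2)\leq\|A_1-A_2\|^{1/2}$ only produces a bias of order $\sqrt h$, strictly weaker than the claimed $h$. Recovering the sharper rate requires $\sqrt{\cdot}$ to be Lipschitz in a neighbourhood of $\Omega(x,t)$, which is automatic once $\Omega(x,t)$ is bounded away from the boundary of the PSD cone -- an assumption that is implicit in the ``$C^1$-in-HS-norm'' phrasing of Condition~\ref{cond:consistency-dcov-smoothing}(2) but will need to be made explicit. Uniformity in $t$ is a secondary concern, following directly from the $t$-uniform bounds of Condition~\ref{cond:consistency-dcov-smoothing}(2)(3), which ensure Proposition~\ref{prop:rate-of-convergence-sample-ds-cov} and the smoothness bound are applied with $t$-free constants.
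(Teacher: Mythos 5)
Your proposal follows essentially the same route as the paper's proof: both arguments exploit the identity $\sqrt{\hat\Omega(x,t)}=\sum_l w_l(x)\sqrt{\breve\Omega_l(t)}$ to work entirely in the square-root (Frobenius) space, both use the Lipschitz property of $y\mapsto\sqrt{(x-y)(x-y)^\tp}$ (the engine behind Proposition~\ref{prop:rate-of-convergence-sample-ds-cov}) to control the error from estimating the mean, and both reduce the remainder to a Nadaraya--Watson bias/variance computation at an interior point with $f(x)>0$ under the graph-distance kernel (Lemma~\ref{lma:smoothing-mse-univariate}). The only structural difference is the grouping: the paper inserts the intermediate object $\tilde\Omega(x,t)$, in which the sample means $\overline Y_l$ are replaced by the true means, so that the $\bigop(n^{-1/2})$ term is purely the mean-replacement error while the per-location sampling noise of $\sqrt{\tilde\Omega_l(t)}$ feeds into the kernel variance with conditional variance $O(1/n_l)$, yielding the $1/(nLh^2)$ term. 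Your term $A$ lumps both stochastic sources together; as you observe, the coarse bound $\ee_X A=\bigop(n^{-1/2})$ already fits inside the claimed rate, so your finer independence-based variance bound is optional.

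On your ``main obstacle'': the paper never needs Lipschitz continuity of the matrix square root map $M\mapsto\sqrt M$. Since $\ee_X\sqrt{\tilde\Omega_l(t)}=\ee\sqrt{\vep(X_l,t)\vep(X_l,t)^\tp}=\sqrt{\Omega(X_l,t)}$, the regression function being smoothed in the square-root space is the composite field $x\mapsto\sqrt{\Omega(x,t)}$ itself, and the $O(h)$ bias only requires \emph{that field} to be Lipschitz in $x$ --- which can hold even where $\Omega(x,t)$ is singular and does not require $\sqrt{\cdot}$ to be Lipschitz on the PSD cone. You are right, though, that Condition~\ref{cond:consistency-dcov-smoothing}(2) as literally written imposes smoothness on $\Omega(\cdot,t)$ rather than on $\sqrt{\Omega(\cdot,t)}$, while the paper's proof applies Lemma~\ref{lma:smoothing-mse-univariate} to the coordinates of $\sqrt{\tilde\Omega_l(t)}$; the condition should be read (or restated) as a smoothness condition on the square-root field. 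Your alternative fix --- eigenvalues of $\Omega(x,t)$ bounded away from zero --- is sufficient but stronger than necessary, and sits awkwardly with the paper's stated motivation that $d_S$ is chosen precisely because it accommodates singular covariances.
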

The first error term comes from the fact that we are using the sample mean in place of the true mean in the
computation of the sample $d_S$-covariance, while the second error term is a bias plus variance decomposition. Notice
that the $n$ in the variance term is unusual, and is related to the estimation error of $d_S$-covariances at
the observation locations. In particular, the variance is inversely proportional to the number of observations
per location, regardless of $L$ and $h$.

\section{Analysis of sound data from the BNC}
\label{sec:results}

We apply here the proposed method to the ``class'' dataset described in Section~\ref{sec:sounds}.
As mentioned, the sound tokens come from $110$ distinct locations within Great Britain, which are
indicated on the geographical map in Figure~\ref{fig:aast-loc}. Also shown is the triangulation used for the
smoothing, where the internal nodes contain the locations of the observations.
It can be seen that the observed locations are irregularly spaced in the region, with high density of the
observations around London and other large cities and very sparse observations in Wales and central Southern
England, for example. In particular, only three locations are available in Scotland and therefore we will not
draw strong conclusions about the dialect variation in that country. Figures~\ref{fig:gb-counties} and
\ref{fig:gb-regions} in the \supplement show the counties and regions of Great Britain.

While the method allows for a smooth reconstruction of the sound from the mean MFCC (and a few of these
reconstructed sounds for the vowel described above can be found as Supplementary Material), we want also to
represent the sound variations (and those of the their $d_S$-covariance) on a map to be able to explore dialect
variations. We need therefore to reduce the dimensionality of the data object. Among the possible
alternatives, we choose to project the mean smooth field onto the principal components obtained from the
original data because this allows the comparison of the projections of the smooth field estimated using different
choices of bandwidth parameters. Concerning the $d_S$-covariance smoothed field, its variation will be explored
by considering the pairwise distances of the estimated $d_S$-covariance field, and comparison of
 these distances with those obtained under the assumption that there
is no spatial variation in the $d_S$-covariance field. We will use a Gaussian kernel for all
the results of this Section, and we will now discuss the choice of the smoothing parameters.

\begin{figure}[th]
  \centering
    \includegraphics[page=1]{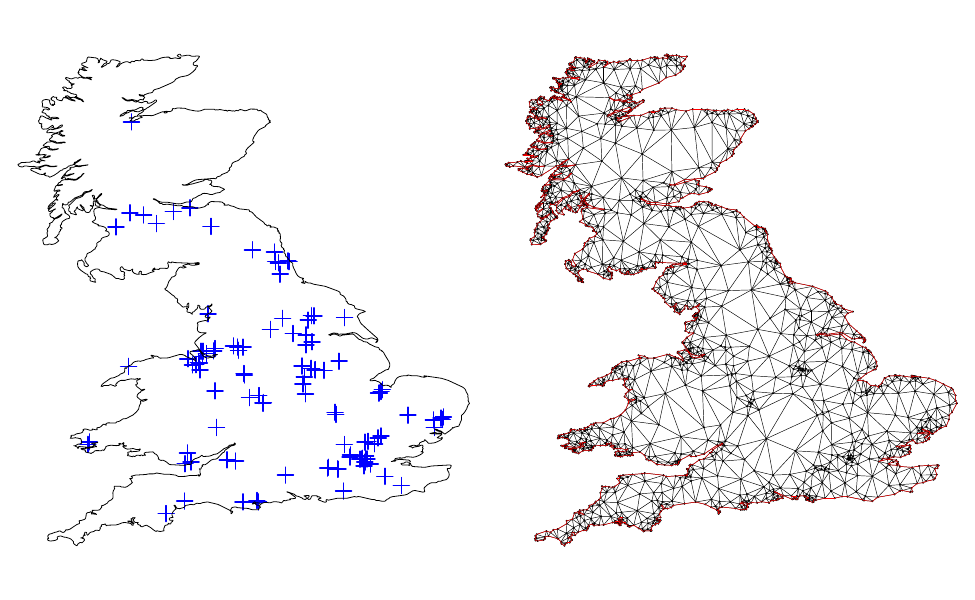}\\
   \caption{Geographical locations where data are available in the ``class'' dataset (left) and triangulation
     of Great Britain (right).}
  \label{fig:aast-loc}
\end{figure}

\subsection{Choice of Smoothing Parameters} \label{sec:choice-smoothing-parameters}

\subsubsection{Varying Bandwidths}
In both the estimator for the mean and the covariance, a bandwidth varying with the geographical
location can be used. This is particularly important when the locations of the observations are irregularly spaced in
the region of interest, as is the case for the ``class'' dataset, where the use of a constant bandwidth would
lead to over- or undersmoothed estimates.

A possible approach is to adapt the bandwidth to the density of the observations
 using the distance from the $k$-th nearest \emph{location} to modulate the global bandwidth, i.e.\
 \begin{equation}
   \label{eq:band-location}
   \hbar(x)=h \cdot \Delta_L(x, k),
 \end{equation}
 where $\Delta_L(x, k)$ is the geographical distance between $x$ and the $k$-th nearest location to $x$.
 This \emph{$k$-th-nearest locations varying bandwidth} adjusts the bandwidth to the density of the observed
 locations, thus guaranteeing that information from comparable numbers of observed locations are used in the
 estimation at each point. However, if there is a large variability between number of observations at
 different locations, one may prefer to adjust to the number of observations. We can then define a
 \emph{$k$-th-nearest observations varying bandwidth} as
\begin{equation}
  \label{eq:band-observation}
  \hbar(x)=h \cdot \tilde \Delta(x, k),
\end{equation}
where $\tilde \Delta(x, k)$ is the least distance from $x$ within which there are at least $k$ observations,
i.e.\
\[
  \sum_{l \, :\,  d(x, X_l) < \tilde \Delta(x, k) }  n_l < k \qquad \text{and}
  \qquad \sum_{l \, :\,  d(x, X_l) \leq \tilde \Delta(x, k)} n_l \geq k.
\]
A third alternative would be to simply using a fixed bandwidth $ \hbar(x)=h$ for all $x$, but this leads to
the problem of oversmoothing in the regions with denser observations, as mentioned above.

The idea of adjusting the bandwidth on the basis of observation density is well known in non-parametric
regression \citep[see e.g.][]{fan1995data}, but the difficulty in estimating the bivariate density with a
relatively small numbers of observations led us to prefer the use of the distance from the $k$-th nearest
neighbour as proxy for the inverse of the density of the observations, this distance being expected to be
small in high density regions and large in low density regions.

The expressions of the bandwidth in \eqref{eq:band-location} and \eqref{eq:band-observation} contain two
parameters that need to be chosen: the number $k$  of nearest neighbours to be used to adapt
the bandwidth and the global smoothing parameter $h$. These can be chosen by cross-validation, as described in
the next Section.

\subsubsection{Cross-validation for varying bandwidth parameters}

The choice of the parameters $k$ and $h$ for the varying bandwidths \eqref{eq:band-location} and
\eqref{eq:band-observation} can be guided by estimating the prediction error as a function of such
parameters using a cross-validation procedure. We propose here to use a
leave-one-location-out cross validation for the choice of the parameters $k$ and $h$. For the mean field, the
cross-validation is defined by
\begin{equation*}
  \text{mean.cv}(k,h) = \sum_{l = 1}^L \frac{ \hnorm{ \overline Y_l - \hat m_{-l}(X_l) }^2 }{\hat{\sigma}^2(X_l)},
\end{equation*}
where $\hat m_{-l}$ is the estimate of the MFCC field obtained without all the MFCCs observed at location $X_l$.
Analogously, we can define a cross-validation error for the $d_S$-covariance estimator as
\begin{equation*}
  \text{cov.cv}(k,h) = \sum_{l = 1}^L \int_{0}^1 d_S^2(\widehat \Omega_{-l}(X_l)(t),\breve{\Omega}_l(t)) \mathrm{d}t,
\end{equation*}
where $\widehat \Omega_{-l}(X_l)$ is the prediction for the $d_S$-covariance at location $X_l$
obtained from (\ref{eq:cov-est}) without the observations at location $X_l$, and $\breve{\Omega}_l$ is the sample $d_S$-covariance at $X_l$.
It is however important also to explore the results visually, using the strategies described in Section~\ref{sec:projection-pc}, for different values of the
smoothing parameters to be sure that the chosen parameters are not leading to oversmoothing or overfitting.

For the ``class'' dataset, the cross-validation errors different values of $h$ and $k$ can be found  in
Figure~\ref{fig:aast-cv-loc} (for the nearest locations bandwidth) and in
Figure~\ref{fig:aast-cv-obs} of
the \supplement (for the nearest observations bandwidth).

For the mean field, the nearest locations bandwidth yields the minimal cross-validation errors, with  $h=0.5,\, k=14$.
The nearest observations bandwidth yields a slightly higher minimal cross-validation error ($h=1.5, k=300$).
As will be seen in Figures~\ref{fig:aast-mean-optim-nearest-loc} and
Figure~\ref{fig:aast-mean-optim-nearest-obs} of the \supplement,
the nearest location bandwidth yields maps that capture more of the variability of the mean MFCC field,
whereas the nearest observations bandwidth seems to be oversmoothing. Therefore, we shall use the varying
bandwidth with the nearest locations for the interpretation.

For the $d_S$-covariance field, the cross-validation curves decrease as the bandwidth parameter $h$
increases, and seem to reach a plateau. We take the smallest value of $h$ (and the corresponding $k$) that
reaches the plateau, which is $h=1, k=32$ nearest locations. Although this choice
seems to contradict Occam's razor (or ``law of parsimony''), we make it consciously
since we are not trying to prove the presence of spatial variation in the
$d_S$-covariance field. It is an established fact in socio-linguistics that there is
spatial variation in speech covariance  \citep[as can be seen in the vowel space
analyses of][]{clopper:2005,strange:2007,clopper:2008,fox2017reconceptualizing,renwick2017analyzing}.  We are therefore trying to estimate it in the
best possible way by choosing the most flexible model that fits the data, as long as
it is as good as less variable models. 
While there are possible reasons why the spatial variation in $d_S$-covariance is not
evident from the cross-validation curves, which could, for example, include the
confounding effect of sex or age on the MFCCs, the microphone and room reverberation
effect, or a small number of sound tokens where there is a mismatch between the
geographical location of the recording and the spoken dialect of that region, we
shall see in Section~\ref{sec:d-cov-field} that there is in fact evidence to support
that the $d_S$-covariance field is not constant.


\begin{figure}[p]
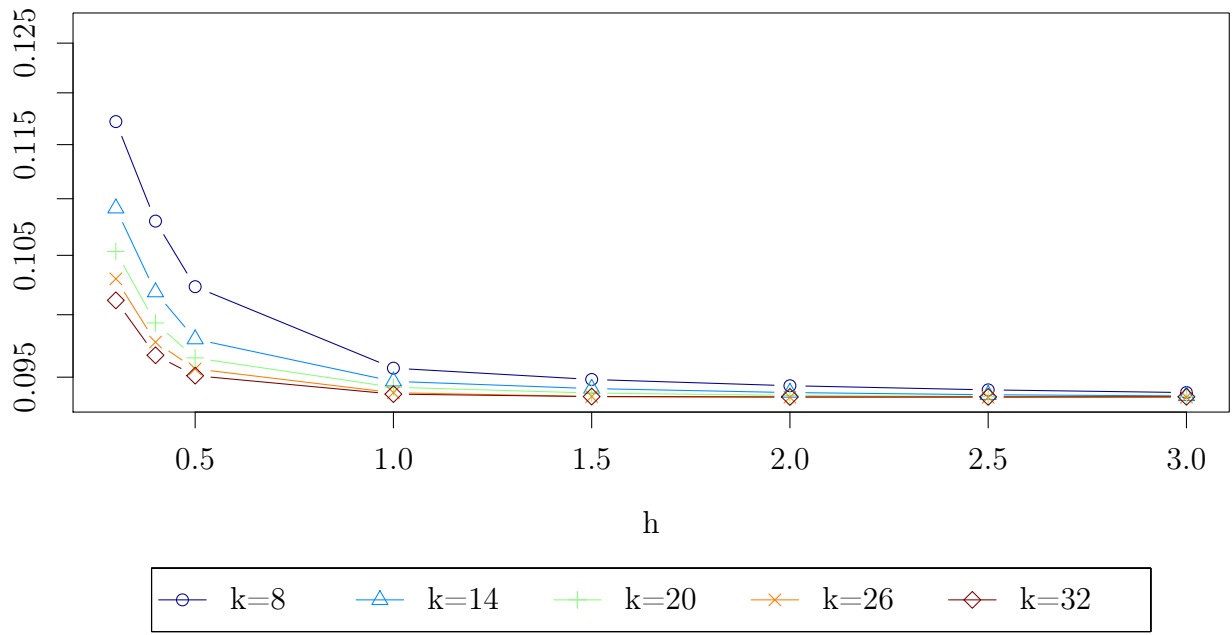

  \centering
   \input{./figures/cv-aast-mean-loc-new.tex}\\
  \input{./figures/cv-aast-cov-sqrt-loc-new.tex}\\
   \caption{Cross-validation curves of the ``class'' dataset for the mean MFCC field (top) and the
     $d_S$-covariance field (bottom)  when the bandwidth is adjusted using the $k$-th nearest locations.}
  \label{fig:aast-cv-loc}
\end{figure}

\subsection{Projection of the mean field onto principal components}
\label{sec:projection-pc}

Visualization of the field of mean MFCC is not a straightforward task. Indeed, at each location $x$ in
Great Britain, $\hat m(x)$ is an element of $L^2 \left( [0,1], \bR^p \right)$. A vizualization of the field $\hat m$
can be obtained by projection onto suitable elements of $L^2 \left( [0,1], \bR^p \right)$, i.e.\ by
looking at the map $x \mapsto \sc{ \hat m(x), \vfi }$ for various $\vfi \in L^2 \left( [0,1], \bR^p \right)$.
Here we choose to project onto the principal components of the MFCCs $\{Y_{lj}  : l = 1,\ldots, L ; j = 1,
\ldots, n_l \}$ (i.e.\  the pointwise multivariate PCA, that
is, the multivariate PCA of $Y_{jl}(t)$ evaluated over a discrete grid of values
$t$, or other words, our PCA is based on the eigen-analysis of the sample
covariance matrix of $Y_{lj}(t)$---and not its correlation matrix---evaluated
over a discrete grid of values $t$; another approach could be to use the method
proposed in \citet{chiou2014multivariate}).
This allows the reproduction of the geographical variation of the projections which
capture most of the variability in the original data and to compare the fields estimated for different values of $h$ and $k$, the projection directions being independent from them.

\begin{figure}[p]
  \centering
   \includegraphics[page=1, width=\linewidth]{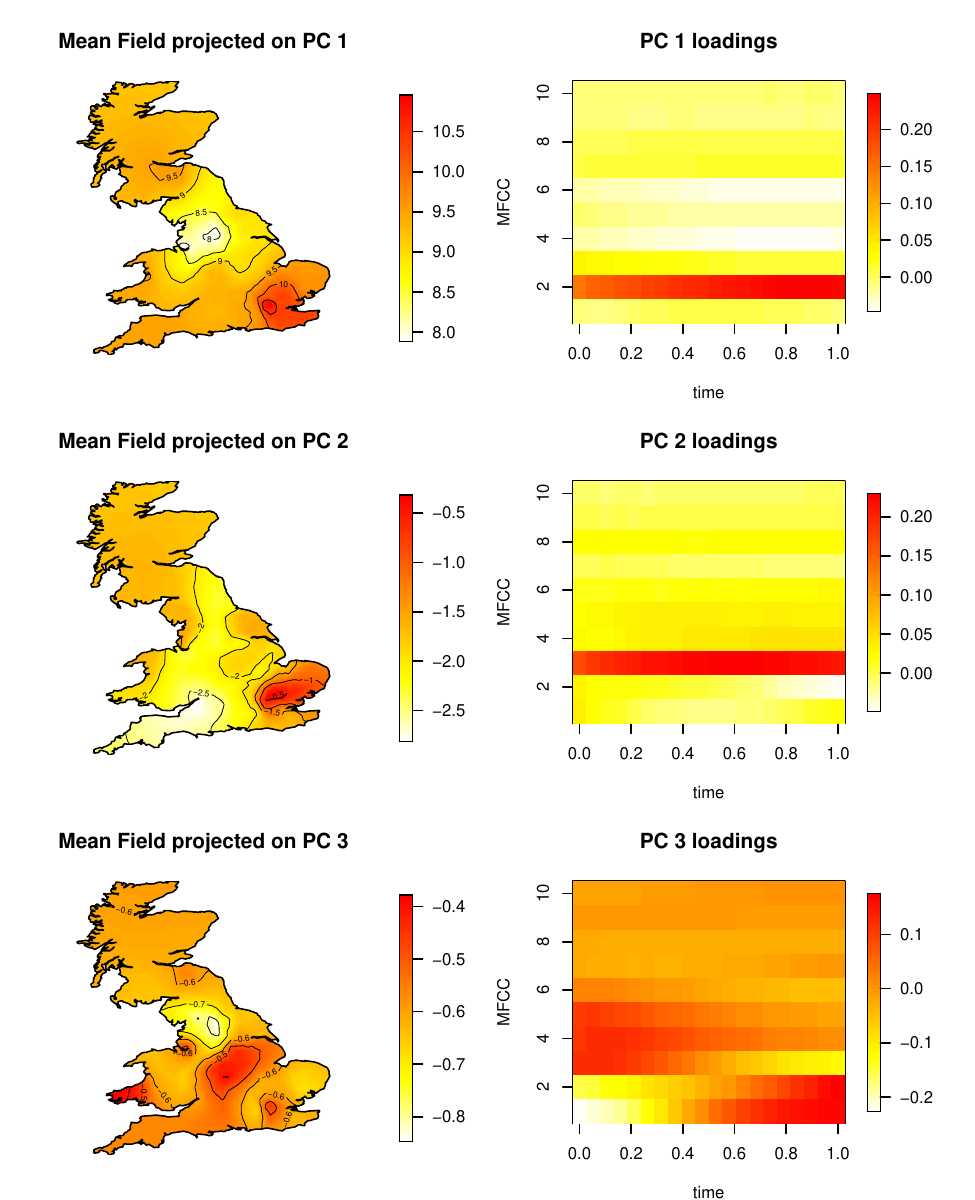}
  \caption{Left: Color maps with contours of the mean smooth MFCC field obtained for
    the ``class'' vowel with $h=0.5$ and $k=14$th nearest locations (denoted \emph{NL map} in the text), projected
    onto the first three principal components directions (from top to bottom) of the
    original data $\left\{Y_{lj}(t): l=1,\ldots, L; j=1,\ldots, n_l \right\}$.
  Right: Colour image representing the projection directions (loadings). }
  \label{fig:aast-mean-optim-nearest-loc}
\end{figure}

The maps of the projections of the estimated field for the choice of $h$ and $k$ that minimises the
cross-validation error can be found Figure~\ref{fig:aast-mean-optim-nearest-loc} (this corresponds to the nearest locations
bandwidth, with $h=0.5, \, k=14$). The maps of projections for the
 nearest observations bandwidth can be found in Figure~\ref{fig:aast-mean-optim-nearest-obs} of the
 \supplement.
 The first principal component direction (which accounts for $25.2\%$ of the total
 variance) essentially considers the energy
on the second cepstral coefficient ($92.4\%$ of its total energy), i.e.\ on the low
frequencies. The second principal component (which accounts for $19.6\%$ of the total
variance) essentially
considers the energy in the third cepstral coefficient ($91.2\%$ of its total energy), again energy in the low frequencies. The third
principal component direction (which accounts for $8.8\%$ of the total
variance) mainly consists of time dynamics (along the sound length) of the relative
volume, and the second and third cepstral coefficients, with some moderate time dynamics in the cepstral
coefficients $4$ and $5$. The fact that most of the energy in the first and second PC loadings concentrate on
a single cepstral coefficient ($92.4\%$ and $91.2\%$ of their respective total energy) confirms that the MFCC representation is indeed a suitable representation for
speech sounds. 
The map of the mean field
projected into the first principal component direction highlights
the difference between the region around London and the rest of the country,
in particular part of North England (and most strongly around Bradford).
The projection into the second principal component direction produces
high values in East England, and contrasts these values with South West, West Midlands, Yorkshire and
the Humber, and North East England, with the strongest contrast being with South West England.
The projection in the third principal component direction produces low values in North England, and contrasts
this region with isolated regions, such as East Midlands, the London area, and South West Wales.
Figures~\ref{fig:gb-counties} and \ref{fig:gb-regions} in the \supplement show the
counties and regions of Great Britain, and are provided as geographical
aids\footnote{The maps shown in this paper do not include the Isle of Wight, located
off the south coast of Great Britain, as there is no data present here, and as it is
not simply connected to the rest of mainland UK, it is not possible to provide smooth
estimates there.}.


In order to assess whether there is spatial information in the mean field estimate, we compare our estimates
with a simulation where the mean and the error terms have no spatial information. The results of the simulation
provide evidence in support of spatial structure for the mean field, which is expected since the
cross-validation curves have a clear minimum. Details of the simulation are given in
Section~\ref{sec:simulation-study} of the \supplement.

\subsection{MFCC $d_S$-Covariance Field}
\label{sec:d-cov-field}

While the mean MFCC field captures the information about the average dialect sound changes, the regional
variability of such dialect sounds may well also be of considerable interest.  We therefore also want to
explore how the $d_S$-covariance changes over the region of interest. While it is in principle possible to use
dimension reduction methods, the interpretation of projections of the $d_S$-covariance may be
problematic, as discussed in Section~\ref{sec:d-cov}.  An alternative way to represent the $d_S$-covariance
variation is to consider a single location of interest and plot the square-root distances (averaged over the
length of the sound) between the $d_S$-covariance at the location of interest,  and the $d_S$-covariances at
all other locations of the map. This produces $2$D surfaces that reflect which parts of the country are more
similar or dissimilar to the location of interest with respect to $d_S$-covariance. However, such maps are not
directly interpretable, because many of their features appear due to the smoothing method. Indeed,
Figure~\ref{fig:contour-essex} shows the contours of the pairwise distances from Harlow (Essex), overlaid by
contours obtained from 100 simulations from a model with constant mean and constant $d_S$-covariance field
(details of the simulations are given in Section~\ref{sec:simulation-study} of the \supplement), a procedure which can be considered a bootstrap approximation to the underlying null field. We
can see in the Figure that the general form of the contours of the data and the simulations have similar
shapes (such systematic effects are not present for
the mean MFCC field, as can be seen from Figure~\ref{fig:global-model-mean} of the \supplement).
This is because the raw $d_S$-covariances $\breve \Omega_l$ are quite noisy (indeed,
Figure~\ref{fig:dcov-dist-vs-geographical-dist} in the
\supplement, which shows the pairwise distance between the raw $d_S$-covariances against their
corresponding geographical distance, has a nugget). Even though the contours of the data and the simulations
have similar shapes, there are some significant differences between them. In Figure~\ref{fig:contour-essex},
we notice that as one moves away from Harlow (Essex), the distances between the $d_S$-covariances are growing
faster in the data than what would be expected if there was no spatial structure in
the $d_S$-covariance field (this can be seen by noticing that the thick dashed lines
are not always contained in the bulk of the thin lines of the corresponding color).
However, this is not true for all regions of Great Britain. Indeed, Figure~\ref{fig:contour-lancashire}, which
shows the pairwise distances from Morecambe (Lancashire), does not exhibits such features as clearly. In
principle, one could look at such maps of contours of distances from each region of Great Britain to assess
whether or not the $d_S$-covariance field is varying spatially, but this is cumbersome and not visually
appealing.  A more appropriate tool for this purpose is to represent a normalized version $z_D(x,y)$ of the
pairwise distances between $d_S$-covariances at locations $x$ and $y$.  The definition of $z_D(x,y)$ is as
follows:
\begin{equation}
  z_D(x,y) = \frac{D(x,y) - \overline{D^*}(x,y)}{\sigma^*(x,y)},
  \label{eq:dist-zscore}
\end{equation}
where $D(x,y)$ is the distance between the $d_S$-covariances at $x$ and $y$ estimated from the data,
$\overline{D^*}(x,y)$, respectively $\sigma^*(x,y)$, is the average, respectively the standard deviation, of $\{
  D^{*b}(x,y): b=1,\ldots, 100\}$, where $D^{*b}(x,y)$  is the distance between the $d_S$-covariances at $x$ and $y$
for the $b$-th simulation replicate (for both the data and all the simulations, the smoothing parameters are
$h=1$, $k=32$ nearest locations). The notation $z_D(x,y)$ is chosen because \eqref{eq:dist-zscore} can be
interpreted as a z-score for the distance between the $d_S$-covariances between locations $x$ and $y$ of the
data, under the null hypothesis that the $d_S$-covariance field is constant.
Figure~\ref{fig:zscores-for-two-locations} shows the surfaces $\{ z_D(x_0,y): y \in \england
\}$ for the two locations $x_0 \in \england$ corresponding to the contours of
Figures~\ref{fig:contour-essex}~and~\ref{fig:contour-lancashire}. We can see in
Figure~\ref{fig:zscores-for-two-locations}
that the values of $y \in \england \mapsto z_D(x_0, y)$, for $x_0$ corresponding to Harlow (Essex), are all
larger than $2$ for $y$ in the Midlands and South of England, indicating a difference between
their $d_S$-covariance and that of Harlow.
 For $x_0$ corresponding to Morecambe (Lancashire), the
 values of $y \in \england \mapsto z_D(x_0, y)$ are below 2 (and even negative) for
 $y$ in
 North East England, East Midlands, and South East England, indicating little to no
 difference between their $d_S$-covariance and that of Morecambe. These conclusions are
 coherent with (and make more precise) those made from
 Figures~\ref{fig:contour-essex}~and~\ref{fig:contour-lancashire}.
 Instead of looking at each surface $\{ z_D(x_0, y): y \in \england\}$ separately,
it is possible to consider many such surfaces for locations all over Great Britain to get a global
appreciation of the variation of the $d_S$-covariance field.  Figure~\ref{fig:aast-zscores-map} shows the maps
associated to many representative locations together with their geographical position in Great Britain, a
``map of maps''.
We can see in the Figure that there is a very strong indication that the $d_S$-covariances of the region
around Glasgow and Edinburgh are different from those of North England, and
that the $d_S$-covariance of the Midlands are different from those in South and South-West England.
There is also very strong indication that the $d_S$-covariance around
Northamptonshire is different from
those of East England and South-East England, and moderate to strong indication that the $d_S$-covariances of
South England are different from those of the rest of England.
All of these interpretations should be of course tempered by the fact that they are
drawn from a very crude univariate representation of the $d_S$-covariance field (namely, the $z$-scores of
their pairwise distances), and while it allows to find regions where the $d_S$-covariance field is varying
spatially, it is not clear if a small value of the z-score $z_D(x,y)$ implies that there is no difference
between the $d_S$-covariances at $x$ and $y$.
Figures~\ref{fig:gb-counties} and
\ref{fig:gb-regions} in the \supplement show the counties and regions of Great Britain, and are
provided as geographical aids.


\begin{figure}[p]
  \begin{center}
    \includegraphics{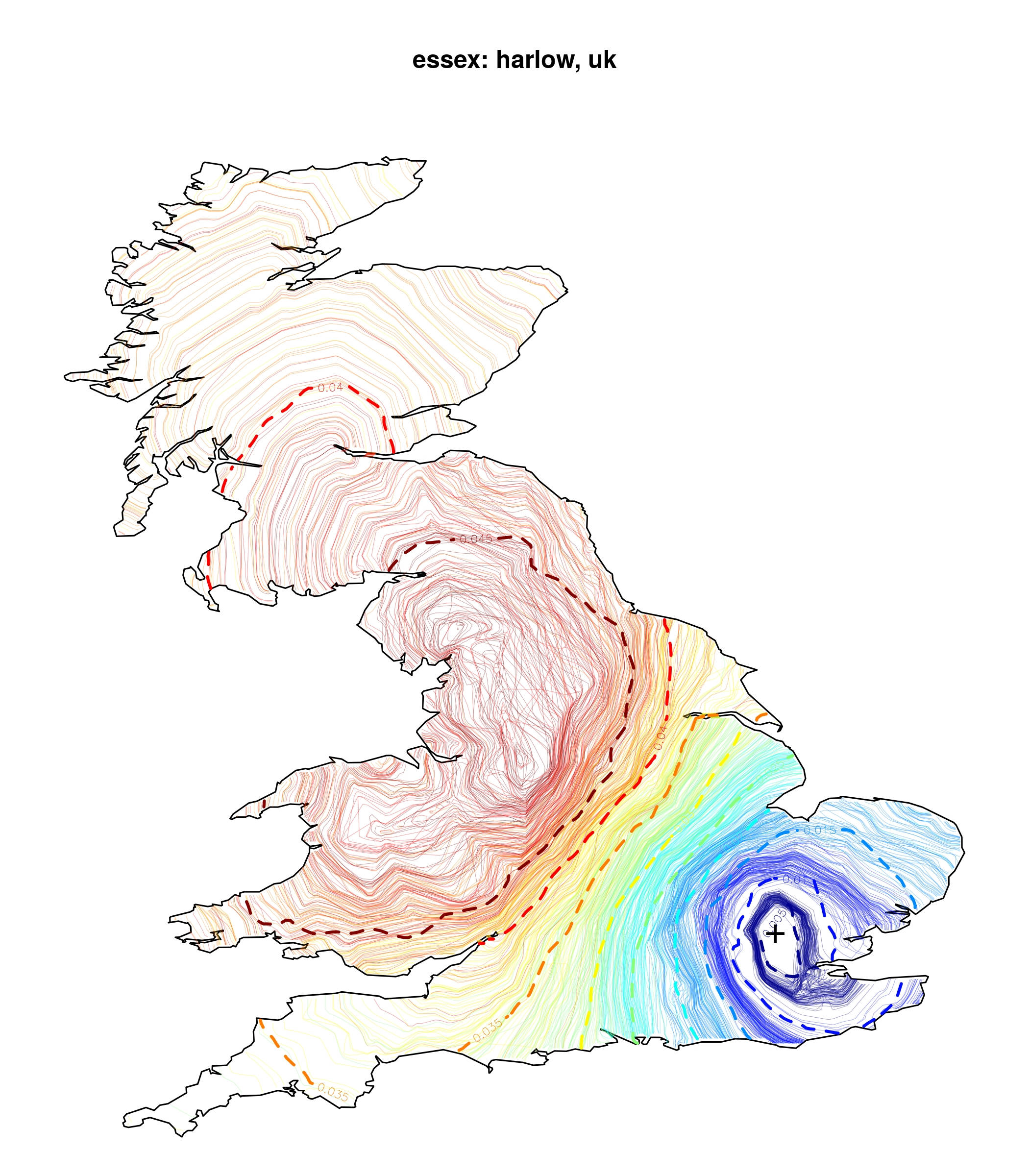}
  \end{center}
  \caption{Contours of the pairwise distances between the $d_S$-covariances at Harlow (Essex), and other
    location in Great Britain. The thick dashed lines correspond to the contours (level sets) for the BNC
    data, and the corresponding contours for each of the 100 simulations are given in thin lines, with the
    corresponding color.}
  \label{fig:contour-essex}
\end{figure}

\begin{figure}[p]
  \begin{center}
    \includegraphics{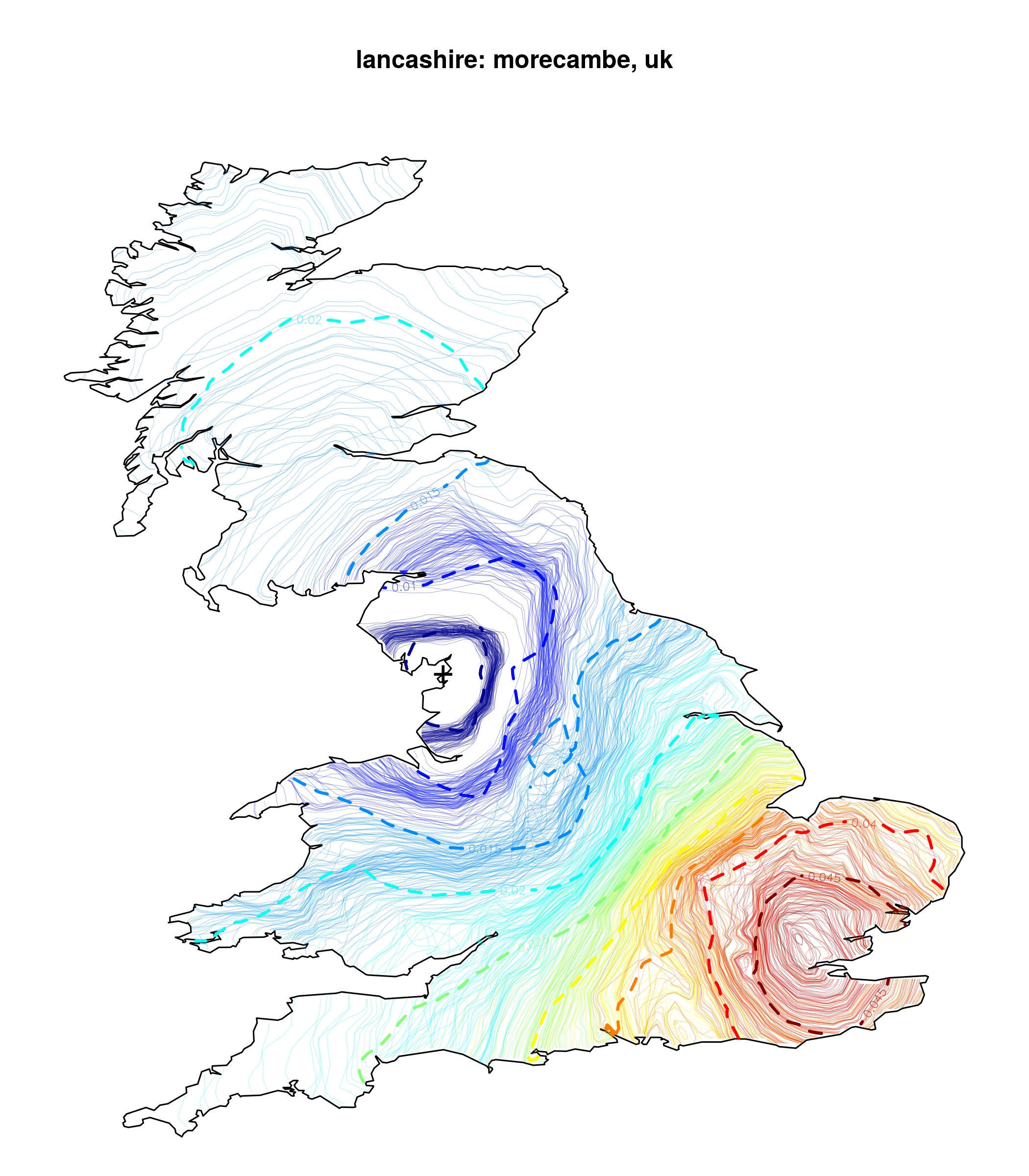}
  \end{center}
  \caption{Contours of the pairwise distances between the $d_S$-covariances at Morecambe (Lancashire), and other
    location in Great Britain. The thick dashed lines correspond to the contours for the BNC data,
    and the corresponding contours for each of the 100 simulations are given in thin lines, with the
    corresponding color.}
  \label{fig:contour-lancashire}
\end{figure}

\begin{figure}[h]
  \begin{center}
    \includegraphics{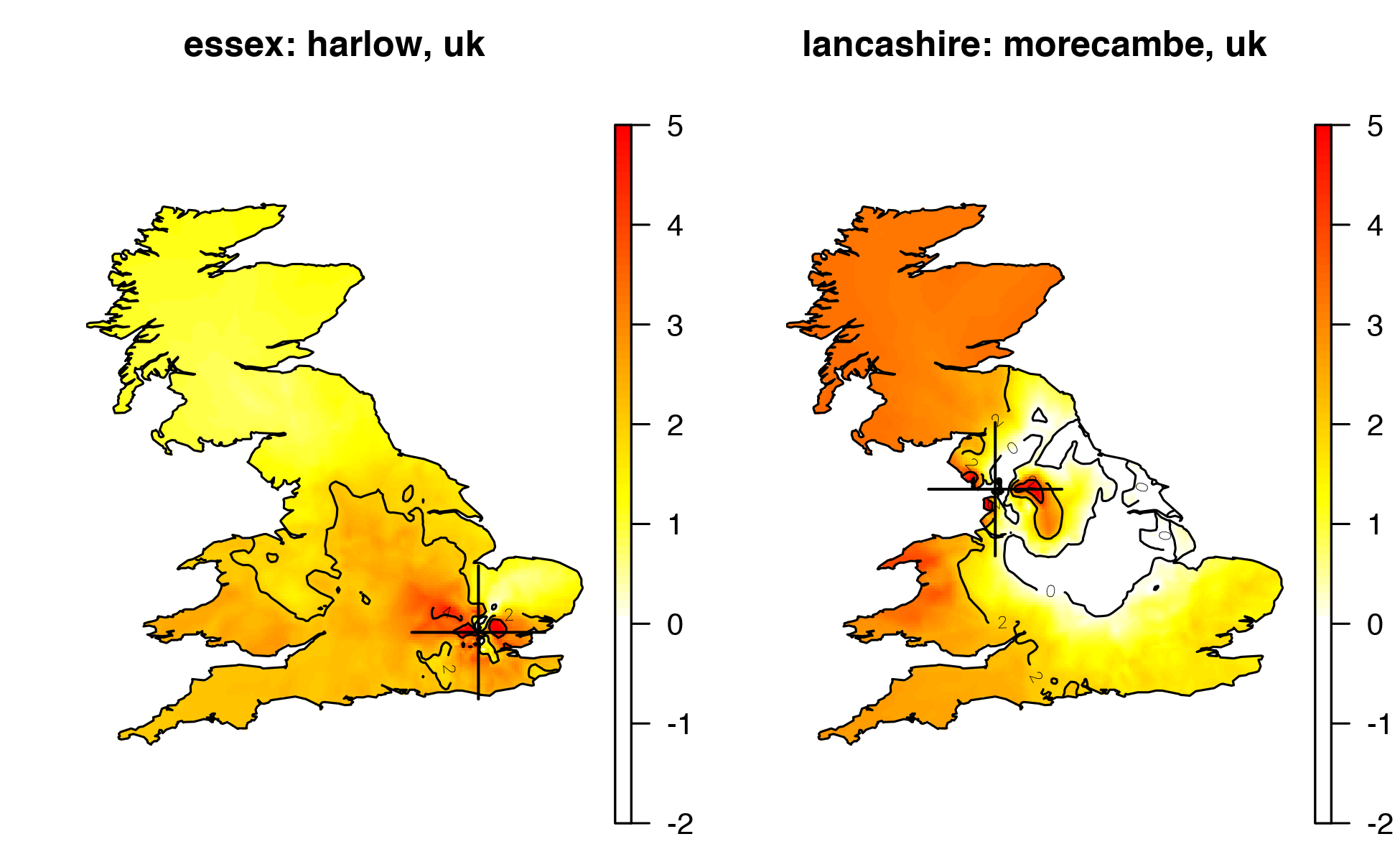}
  \end{center}
   \caption{Z-scores $\{ z_D(x_0,y) : y \in \england \}$ of the pairwise distances between the
     $d_S$-covariances from Harlow (Essex) (left), and from Morecambe (Lancashire) (right). The ``+'' on the maps represent the
     location of the corresponding $x_0$. Notice that the left sub-figure indicates
     a difference between the $d_S$-covariance of Harlow and those of the Midlands
     and South of England, whereas the right sub-figure indicates little to no
     difference between the
     $d_S$-covariance of Morecambe and those of North East England, East Midlands,
     and South East England. The definition of $z_D(x_0,y)$ is given in \eqref{eq:dist-zscore}. }
  \label{fig:zscores-for-two-locations}
\end{figure}

\begin{figure}[p]
  \centering
  \includegraphics{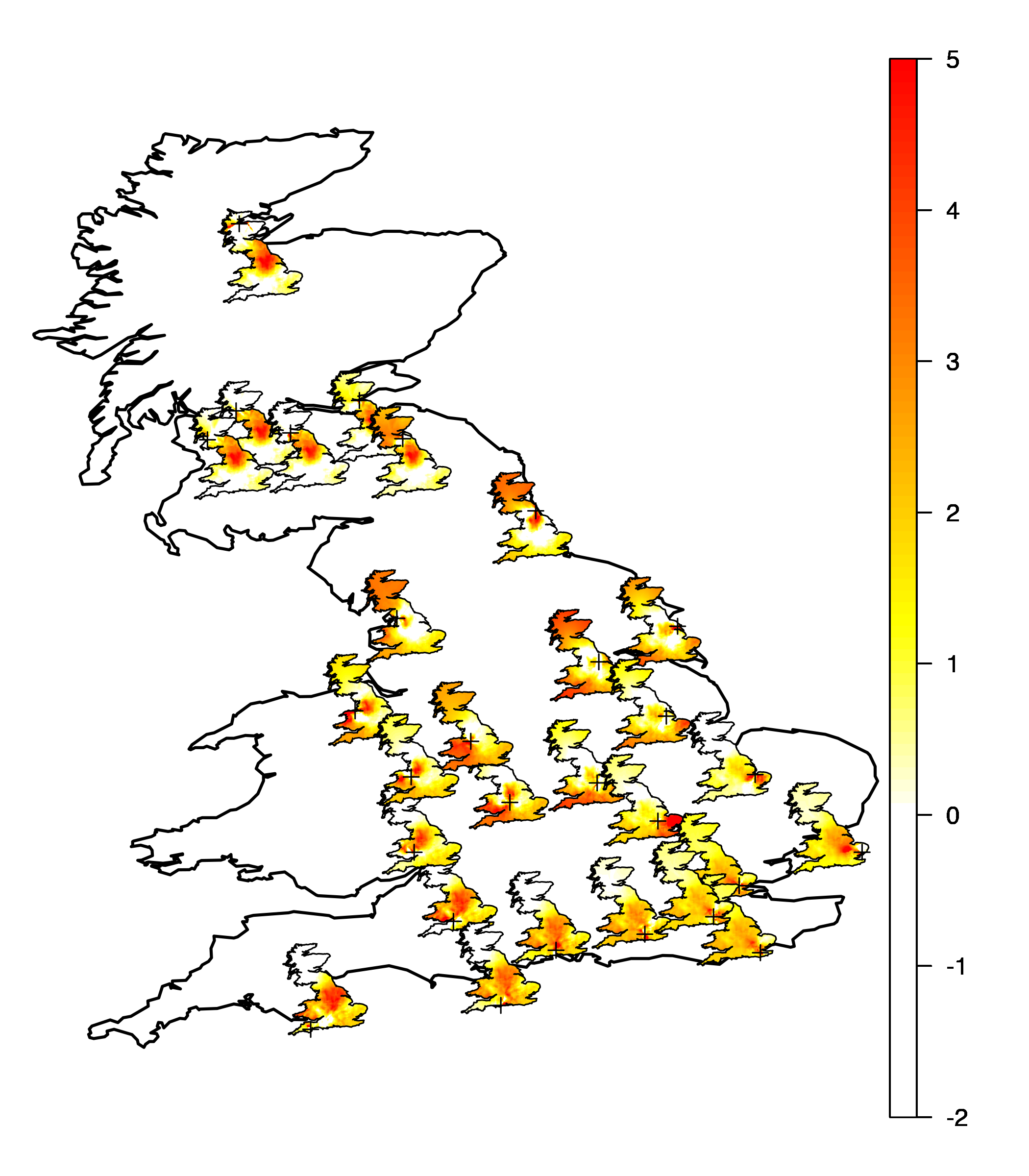}
   \caption{Z-scores $\{ z_D(x_0,y) : y \in \england \}$ of the pairwise distances between the
     $d_S$-covariances for a number of representative locations $x_0$ in Great Britain, reported on the
     corresponding positions on the geographical man of Great Britain. The '+' on the small maps represent the
     location of the corresponding $x_0$. The definition of $z_D(x_0,y)$ is given in \eqref{eq:dist-zscore}. }
  \label{fig:aast-zscores-map}
\end{figure}

\section{Discussion}
\label{sec:discussion}

We presented a method to explore spatial variation of sound processes which is of interest in particular for
dialectology and comparative linguistics. The need to model the change in the covariability between
frequencies, as well as in the mean sound, led us to propose the novel statistical concept of $d$-covariance,
i.e.\ a definition of covariability that relies on a metric distance $d$ different from the Euclidean (Frobenius)
distance. This allows the use of metrics that do not produce swelling effects, while
estimating the $d$-covariance consistently in the locations where observations are available. In particular,
we chose the square root distance $d_S$ described in \citet{dryden2009} because it is defined for positive
semi-definite matrices, and an explicit expression is available, as we showed in Section
\ref{sec:model-and-estimation}. It is clear that other metrics could be used within this framework,
and indeed recent work on choosing metrics \citep{petersen2016frechet} and smoothing under general metrics
\citep{pertersen:2016regression} could prove relevant to this setting.
However it is important to remember that the choice of metric should be considered within a data application
context as well.

We used a Mel Frequency Cepstral Coefficients (MFCC) representation for the sound data objects because this
has been found empirically to provide a better sound reconstruction, especially in the modified version of the
algorithm proposed by \citet{Erro2014}. Moreover, the fact that the frequency domain is partitioned into a
relatively small number of channels (through a weighted averaging over a range of contiguous frequencies) makes this representation more robust to small frequency misalignments
across speakers. MFCCs can be then treated as multivariate functional data, and we proposed a model where both
the mean and the $d_S$-covariance between coefficients change smoothly in space. We proposed to estimate these
smooth fields with a non-parametric estimator, and showed that this provides consistent estimates both for the
mean and for the $d_S$-covariance field. We also integrated into the smoothing procedure a geographical distance
based on the shortest path on the mesh used to triangulate the possibly non-convex region of interest. This
required a non-trivial argument to show the consistency of the derived estimator and it has a wider
applicability wherever there is the need of accounting for a complex geographical domain.

The proposed method allows, for the first time, the sound variation to be studied using speech recordings directly
(as opposed to phonetic transcription), and provides a continuous model for the sound change (through its
mean and $d_S$-covariance) in place of discrete regions boundaries, such as those traditionally reported in
isoglosses. We analysed speech data from the spoken part of the British National Corpus, and focused on the
pronunciation of the vowel in words such as \emph{fast} or \emph{class}, which is known to vary on a dialect
basis \citep{upton2013}, and has particularly prominent variations in British English. While it is possible
to listen to the reconstructed sounds (as given in the Supplementary Materials), visual maps are often useful
to recognise both global patterns and local features.  Exploring the estimated mean and $d_S$-covariance
fields, we uncovered geographical patterns that resemble established findings about the vowel pronunciation
(such as the contrast between the North and South-East England). However, the variation appears to be somewhat
smoother than expected (i.e.\  from traditional dialectological maps of 'isoglosses'), to the point where it
is possible to identify intermediate regions not easily classified by a hard clustering.
%
This invites additional studies to explore other sounds and further exploration of this and alternative
corpora.
Indeed, possible immediate extensions for this work include studying the joint behaviour of multiple
words/sounds in the language and taking into account additional (non geographical) covariates, such as
socio-economic variables.

\bigskip
\begin{center}
{\large\bf SUPPLEMENTARY MATERIAL}
\end{center}

Supplementary materials (functions used for the smoothing, sample sounds) can be obtained through the authors.


\bibliographystyle{agsm}
\citestyle{agsm}
\bibliography{structure/BNC-shahin,structure/BNC}

\clearpage

\appendices
\beginappendix

\section{Square Root of Symmetric Semi-positive Matrices}
\label{sec:square-roots}

We give here some useful properties of square root of matrices.
The following result states that square root of a symmetric  positive semi-definite matrix is unique.
\begin{thm}[e.g.\ \citet{axler:2015}]
  Let $A$ be a $p \times p$ real matrix. If $A$ is symmetric  positive semi-definite, i.e. $A = A^\tp$ and
  \( x^\tp A x \geq 0, \forall x \in \bR^p,\)
  then there exists a unique positive $p \times p$ matrix $B$ such that $A = BB$.
  The matrix $B$ is called the square root of $A$, and is denoted by $\sqrt{A}$ or $A^{1/2}$.
\end{thm}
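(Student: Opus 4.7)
The plan is to prove existence via the spectral theorem and uniqueness by a commutation argument that forces any positive square root to agree with the one we construct.

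For existence, I would apply the spectral theorem for real symmetric matrices to write $A = Q D Q^\tp$, where $Q$ is an orthogonal $p\times p$ matrix whose columns are orthonormal eigenvectors of $A$ and $D = \mathrm{diag}(\lambda_1,\ldots,\lambda_p)$ collects the eigenvalues. Positive semi-definiteness of $A$ forces $\lambda_i \geq 0$ for every $i$, so the diagonal matrix $D^{1/2} := \mathrm{diag}(\sqrt{\lambda_1},\ldots,\sqrt{\lambda_p})$ is well defined with real non-negative entries. Setting $B := Q D^{1/2} Q^\tp$, one checks immediately that $B = B^\tp$, that $B$ is positive semi-definite (its eigenvalues are the $\sqrt{\lambda_i} \geq 0$), and that $BB = Q D^{1/2} Q^\tp Q D^{1/2} Q^\tp = Q D Q^\tp = A$, using $Q^\tp Q = I$.

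For uniqueness, suppose $C$ is any symmetric positive semi-definite matrix with $CC = A$. The key observation is that $C$ commutes with $A$: indeed $CA = C(CC) = (CC)C = AC$. Since $C$ is symmetric, apply the spectral theorem to $C$ to write $C = P M P^\tp$ with $P$ orthogonal and $M = \mathrm{diag}(\mu_1,\ldots,\mu_p)$, $\mu_i \geq 0$. From $CC = A$ and the commutation $CA = AC$, one obtains that $A$ preserves each eigenspace of $C$, so $A$ is simultaneously diagonalizable in the basis $P$: there exists a diagonal matrix $D'$ with $A = P D' P^\tp$. Comparing $CC = P M^2 P^\tp$ with $A = P D' P^\tp$ forces $D' = M^2$, so the eigenvalues of $A$ on each $P$-eigenspace are $\mu_i^2$, and $\mu_i = \sqrt{\lambda_{\sigma(i)}}$ for the corresponding permutation $\sigma$. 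This pins down $C$ on each eigenspace of $A$ (eigenspaces for distinct eigenvalues $\lambda$ of $A$ correspond to eigenspaces for $\sqrt{\lambda}$ of $C$), and on any eigenspace of $A$ with eigenvalue $\lambda$, the restriction of $C$ must be the scalar $\sqrt{\lambda}$ times the identity. Hence $C$ agrees with $B$ on every eigenspace of $A$, and so $C = B$.

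The main obstacle is the uniqueness argument in the presence of repeated eigenvalues of $A$: one cannot simply read off $C$ from a single diagonalization of $A$, because inside a repeated-eigenvalue subspace the choice of eigenbasis is ambiguous. The commutation $CA = AC$ is precisely what resolves this, by letting one invoke the standard fact that two commuting symmetric matrices are simultaneously diagonalizable; the restriction of $C$ to the $\lambda$-eigenspace of $A$ is then a symmetric PSD matrix whose square is $\lambda I$, which has only one PSD solution, namely $\sqrt{\lambda}\, I$. Everything else in the argument is routine linear algebra.
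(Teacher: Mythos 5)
Your proof is correct: existence via the spectral theorem and uniqueness via the commutation argument (with the key observation that the restriction of any positive square root $C$ to each $\lambda$-eigenspace of $A$ is a symmetric PSD matrix squaring to $\lambda I$, hence equal to $\sqrt{\lambda}\,I$) is a complete argument, and your final paragraph correctly identifies repeated eigenvalues as the only delicate point. The paper itself gives no proof — it cites this as a standard result from \citet{axler:2015} — and your argument is essentially the same spectral-theorem proof found in that reference, so there is nothing further to compare.
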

In particular, this tells us that the square root distance between symmetric positive semi-definite matrices is well defined.
The following gives a explicit formula for the square root of symmetric positive semi-definite rank one matrices. It's proof follows from
direct calculations.
\begin{prop}
  Let $x \in \bR^p, x \neq 0$ and $A$ be a $n \times p$ matrix. Then
  \begin{enumerate}
    \item $(xx^\tp)^{1/2} = xx^\tp/\rpnorm{x}$
    \item $(Axx^\tp A^\tp)^{1/2} = Ax x^\tp A^\tp/\rpnorm{Ax}$ provided $Ax \neq 0$.
  \end{enumerate}
\end{prop}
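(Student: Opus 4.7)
The plan is to invoke the uniqueness theorem stated just above: to identify $A^{1/2}$ for a given symmetric positive semi-definite $A$, it suffices to exhibit a symmetric positive semi-definite matrix $B$ with $BB = A$. So for each part I would simply take the claimed right-hand side as a candidate and verify these three properties.

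For part~(1), set $B := xx^\tp/\rpnorm{x}$. Symmetry is immediate from $(xx^\tp)^\tp = xx^\tp$. Positive semi-definiteness follows from $y^\tp B y = (x^\tp y)^2/\rpnorm{x} \geq 0$ for all $y \in \bR^p$. The key computation is squaring: using associativity and $x^\tp x = \rpnorm{x}^2$,
\[
  BB \;=\; \frac{xx^\tp xx^\tp}{\rpnorm{x}^2} \;=\; \frac{x(x^\tp x)x^\tp}{\rpnorm{x}^2} \;=\; \frac{\rpnorm{x}^2\, xx^\tp}{\rpnorm{x}^2} \;=\; xx^\tp.
\]
By uniqueness, $B = (xx^\tp)^{1/2}$. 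This is the only nontrivial step, and as one can see it is a straightforward one-line calculation.

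For part~(2), I would observe that $Axx^\tp A^\tp = (Ax)(Ax)^\tp$ is itself a rank-one matrix of exactly the form covered by part~(1), applied to the vector $Ax \in \bR^n$ (which is nonzero by hypothesis). Therefore
\[
  (Axx^\tp A^\tp)^{1/2} \;=\; \left( (Ax)(Ax)^\tp \right)^{1/2} \;=\; \frac{(Ax)(Ax)^\tp}{\rpnorm{Ax}} \;=\; \frac{Axx^\tp A^\tp}{\rpnorm{Ax}},
\]
which is what is claimed. No separate verification is needed since part~(1) does all the work.

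Since the proof reduces entirely to a direct substitution into the defining identity of the square root, there is no real obstacle; the only thing to be careful about is invoking the uniqueness theorem explicitly and flagging the hypothesis $x \neq 0$ (respectively $Ax \neq 0$) to ensure the denominator is nonzero and $B$ is well defined.
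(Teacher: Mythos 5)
Your proof is correct and follows the same route the paper intends: the paper dismisses the proposition with ``its proof follows from direct calculations,'' and your verification (candidate matrix is symmetric, positive semi-definite, squares to the target, then invoke the uniqueness theorem) is exactly that calculation, with the clean reduction of part~(2) to part~(1) via $Axx^\tp A^\tp = (Ax)(Ax)^\tp$. Nothing is missing.
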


\section{Technical results and proofs}

\label{sec:technical-results}

\begin{lma}
  \label{lma:soln-to-ds-cov-fit-criterion}
  The minimizer $\hat \Omega(x, \cdot) \in L^2\left( [0,1],  \symmat_p \right)$ of the following fit criterion:
  \begin{equation}
    \sum_{l = 1}^L  K_h\left( \graphdist(x, X_l) \right)
    \int_0^1 d_S^2(\breve \Omega_l(t), \hat \Omega(x, t)) dt,
    \label{eq:fit-criterion-cov-lemma}
  \end{equation}
  is given by
  \begin{equation}
    \hat \Omega(x,t) = \left[ \sum_{l = 1}^L w_l(x) \sqrt{\breve \Omega_l(t) }\right]^2,
    \label{eq:cov-est-inlemma}
  \end{equation}

  \begin{proof}
    Setting $\tilde w_l = K_h\left( \graphdist(x,X_l) \right)$, using the definition of $d_S$ and permuting the sum and
    integral in \eqref{eq:fit-criterion-cov-lemma}, we can rewrite
    \eqref{eq:cov-est-inlemma} as
    \[ \int_0^1 \sum_{l} \tilde w_l \HSnorm{ \sqrt{\breve \Omega_l(t)} - \sqrt{\hat
    \Omega(x,t)} }^2 dt. \]
    This expression is minimized with respect to $\hat \Omega$ by minimizing it for
    each $t$.  
    Fixing $t$ and writing $y_l = \sqrt{\breve \Omega_l(t)}$ and $y =  \sqrt{\hat
    \Omega(x,t)}$, and omitting the integral, the fit criterion becomes
    \[ \sum_l \tilde w_l \HSnorm{y_l - y}^2. \]
    This is just a weighted least-squares problem, whose solution is $y = \sum_l \tilde
    w_l y_l / \left( \sum_l \tilde w_l \right).$ Substituting $y, y_l$ back concludes
    the proof.
  \end{proof}
\end{lma}

\begin{proof}[Proof of Proposition~\ref{prop:d-S-cov-well-defined-and-explicit-formula}]
  Without loss of generality, assume that $\ee X = 0$.
  By definition, we have
  \begin{align*}
    \cov_{d_S}(X) &= \argmin_{\Omega \in \symmat_p} \ee d_S^2\left( XX^\tp, \Omega \right)
    \\ &= \argmin_{\Omega \in \symmat_p} \ee \HSnorm{  \sqrt{XX^\tp} - \sqrt{\Omega}
  }^2
    \\ &= \argmin_{\Omega \in \symmat_p} \ee d_E^2\left(\sqrt{XX^\tp}, \sqrt{\Omega}
  \right).
  \end{align*}
  The minimum is achieved for $\sqrt{\Omega} = \ee \sqrt{XX^\tp}$, hence
  $\cov_{d_S}(X) = \left( \ee \sqrt{XX^\tp} \right)^2 = \eee{\frac{XX^\tp}{\rpnorm{X}}} ^2$.
\end{proof}

\begin{proof}[Proof of Proposition~\ref{prop:d-covariance-and-linear-transformations}]
  Without loss of generality, assume that $\ee X = 0$.
  By~\eqref{eq:ds-cov-as-regularized-cov} of the paper, we
  have $\cov_{d_S}(AX) = \eee{\frac{AXX^\tp A^\tp}{\rpnorm{AX}}}^2 = A \eee{\frac{XX^\tp}{\rpnorm{AX}}}
  A^{\tp} A \eee{\frac{XX^\tp}{\rpnorm{AX}}} A^{\tp}$.
  The proof is completed by showing that
  $\cov_{d_{S,A}}(X)$ must satisfy
  $\cov_{d_{S,A}}(X) = A \eee{\frac{XX^\tp}{\rpnorm{AX}}} A^{\tp} A \eee{\frac{XX^\tp}{\rpnorm{AX}}}
  A^{\tp}$, which follows from an argument similar to the proof of
  Proposition~\ref{prop:d-S-cov-well-defined-and-explicit-formula}. 
  For the special case where $A^\tp A = I$, then $\rpnorm{AX} = \rpnorm{X}$ and 
  \[ \cov_{d_S}(AX) = A \eee{\frac{XX^\tp}{\rpnorm{X}}}^2 A^\tp = A \cov_{d_S}(X)
    A^\tp. \]
\end{proof}

  \begin{proof}[Proof of Proposition~\ref{prop:rate-of-convergence-sample-ds-cov}]
    Let $\tilde S = \left( \frac{1}{n} \sum_{i = 1}^n \sqrt{ (Y_i -
    \mu)(Y_i - \mu)^\tp } \right)^2$, and $S = \cov_{d_S}(Y)$. Recall that
    $S = \left( \ee \sqrt{ (Y - \mu)(Y - \mu)^\tp} \right)^2$. Let $\phi_x(y) =
    \sqrt{ (x-y)(x-y)^\tp }$, for $x, y \in \bR^p$. Notice that $\phi_x(y) =
    (x-y)(x-y)^\tp/\rpnorm{x-y}$ if $y \neq x$, and $\phi_x(x) = 0$. Furthermore, it is
    not difficult to show that $\phi_x$ is Lipschitz, i.e.\ $\HSnorm{\phi_x(y) -
    \phi_x(y')} \leq \kappa_p \rpnorm{y - y'}$, where $\kappa_p \geq 0$ does not
    depend on the value of $x$, but only on the dimension $p$.
    We therefore have
    \begin{align*}
      d_S( \hat S,  \tilde S ) &\leq \frac{1}{n} \sum_{i = 1}^n
      \HSnorm{\phi_{Y_i}(\overline Y) - \phi_{Y_i}(\mu)}
      \\ &\leq \frac{1}{n} \sum_{i = 1}^n \kappa_p \rpnorm{\overline Y - \mu}
      \\ &= \kappa_p \rpnorm{\overline Y - \mu},
    \end{align*}
    and $d_S(\hat S, \tilde S) = \bigop(n^{-1/2})$.

    The proof is completed by showing that $d_S(\tilde S, S) =
    \bigop(n^{-1/2})$, which follows from the central limit theorem applied to the random
    element $\sqrt{ (Y - \mu)(Y - \mu)^\tp }$.  The central
    limit theorem is indeed applicable here since
    \begin{align*}
      \eee{ \HSnorm{ \sqrt{(Y - \mu)(Y - \mu)^\tp} }^2} &= \eee{ \HSnorm{
  \frac{(Y - \mu)(Y - \mu)^\tp}{\rpnorm{Y - \mu}}}^2 }
  \\ &= \ee \rpnorm{ Y - \mu}^2 < \infty.
  \qedhere
\end{align*}
\end{proof}

  \begin{proof}[Proof of Theorem~\ref{thm:consistency}]
    By the triangle inequality,
    \begin{equation}
      \label{eq:eq1-consistency-sample-covd-field}
      d_S(\hat \Omega(x, t), \Omega(x, t)) \leq  d_S(\hat \Omega(x,t), \tilde \Omega(x,t)) +
      d_S(\tilde \Omega(x,t), \Omega(x,t)),
    \end{equation}
    where $\tilde \Omega(x,t )$ is the same as $\hat \Omega(x,t)$, but with the sample mean at the
    observations replaced by the true mean, i.e.\ $\tilde \Omega(x,t ) = \left( \sum_{l = 1}^L w_l(x) \sqrt{\tilde
        \Omega_l(t)}
    \right)^2$,
    \[
      \sqrt{\tilde \Omega_l(t)} = n_l^{-1} \sum_{j = 1}^{n_l} \sqrt{(Y_{lj}(t) -
        m_l(t))(Y_{lj}(t) -
        m_l(t))^\tp} = n_l^{-1} \sum_{j = 1}^{n_l} \sqrt{\vep_{lj}(t) \vep_{lj}(t)^\tp}
    \]
    and $m_l(\cdot) = \ee Y_{l1}(\cdot)$.

    Let us first look at the first term in \eqref{eq:eq1-consistency-sample-covd-field}.
    Writing $\ee_X$ for the expectation conditional on $X_1, \ldots, X_L$,
    the triangle inequality and H\"older's inequality yield
    \begin{align*}
      \ee_X d_S(\hat \Omega(x,t), \tilde \Omega(x,t))  & \leq  \sum_{l = 1}^L w_l(x) \sqrt{\ee_X d_S^2(\breve
        \Omega_l(t), \tilde \Omega_l(t) )}.
    \end{align*}
    By arguments in the proof of
    Proposition~\ref{prop:rate-of-convergence-sample-ds-cov} of the paper, we have
    \[
      \ee_X d_S^2( \breve \Omega_l(t), \tilde \Omega_l(t) ) \leq \kappa_p^2 n_l^{-1} \ee
      \rpnorm{\vep(x,t)}^2\bigg|_{x=X_l} \leq \kappa_p^2 c^{-1} n^{-1} \sup_{x \in \england, t \in [0,1]} \ee
      \rpnorm{\vep(x,t)}^2,
    \]
    which is non-random, and independent of $t$. Since $\sum_{l} w_l(x) = 1$, we get
    \begin{align*}
      \ee_X d_S(\hat \Omega(x,t), \tilde \Omega(x,t))  & \leq
      \frac{\kappa_p}{ \sqrt{c n} } \sup_{x \in \england, t \in [0,1]} \sqrt{\ee \rpnorm{\vep(x,t)}^2}
    \end{align*}

    Let us now look at the term $\ee_X d_S(\tilde \Omega(x,t), \Omega(x,t)) \leq \sqrt{ \ee_X d_S^2(\tilde
      \Omega(x,t), \Omega(x,t))}$. Since
    \[
      \ee_X d_S^2( \tilde \Omega(x,t), \Omega(x,t)) = \sum_{r,s=1}^p  \ee_X \left( \left[
          \sqrt{\tilde \Omega(x,t)}
        \right]_{rs} - \left[ \sqrt{\Omega(x,t)} \right]_{rs} \right)^2,
    \]
    it is enough to control the mean square error of each coordinate of $\sqrt{ \tilde \Omega(x,t) }$.
    Notice that $\sqrt{\tilde \Omega(x,t)}  = \sum_{l = 1}^L w_l(x)  \sqrt{\tilde \Omega_l(t)} $,
    Therefore we can apply Lemma~\ref{lma:smoothing-mse-univariate} to each coordinate $1 \leq r \leq s \leq p$
    (by symmetry), with $Z_l(t) = \left[ \sqrt{\tilde \Omega_l(t)} \right]_{rs}$.
    Since
    $\ee_X \sqrt{\tilde \Omega_l(t)}  =  \sqrt{\Omega(X_l, t)}$
    and
    \begin{align*}
      \varX{ \left[ \sqrt{\tilde \Omega_l(t)} \right]_{rs} } &= n_l^{-1} \varX{ \frac{\vep(X_l,t)_r \vep(X_l,
          t)_s)}{\rpnorm{\vep(X_l, t)} } }
      \\ &\leq n_l^{-1} \ee_X \rpnorm{\vep(X_l, t)}^2
      \\ &\leq \frac{1}{c n} \sup_{x \in \england, t \in [0,1]} \ee \rpnorm{\vep(x,t)}^2
    \end{align*}
    the Lemma can be applied with $m(x,t) = \sqrt{\Omega(x,t)}$ and $\hnorm{\nu}_\infty \leq (c n )^{-1} \sup_{x \in
      \england, t \in [0,1]} \ee \rpnorm{\vep(x,t)}^2$. For fixed $r,s$, the conditional squared bias is bounded by
    $ \bigop(h^2) $
    and the conditional variance term is bounded by
    $ \bigop\left( \frac{1}{n L h^2} \right),$ both bounds being uniform in $t$. The proof is finished by combining these last results.
  \end{proof}

\begin{lma}
  \label{lma:smoothing-mse-univariate}
  Assume $(X_l, Z_l(t)) \in \england \times \LLR, l=1, \ldots, L$ are i.i.d., with $X_l \simiid f$, and assume
  $Z_l | X_l$ are i.i.d. with mean $\eee{Z_l(t) | X_l} = m(X_l, t)$ and $\var{ Z_l(t)| X_l  } = \nu(X_l, t)$,
  and that Conditions~\ref{cond:kernel-and-graph-distance}, \ref{cond:density} of
  the paper hold.
 Furthermore, assume
  \begin{enumerate}
    \item For each $t \in [0,1]$, $m(\cdot, t) : \england  \rightarrow \bR$ is $C^1$, and
      \[
        \hnorm{ \nabla_x m }_\infty := \sup_{x \in \england, t \in [0,1]} \rpnorm{\frac{\partial m}{\partial x}(x,t)} < \infty,
      \]
    \item $f$ is a continuous density on $\england$,
    \item $\hnorm{\nu}_\infty := \sup_{x \in \england, t \in [0,1]} \nu(x,t)  < \infty$ for each $x \in \england$.
  \end{enumerate}
  Let $\hat m(x,t) = \sum_{l = 1}^L w_l(x)Z_l(t)$, where $w_l(x)$ is defined in
  \eqref{eq:nadaraya-watson-weights-cov-field} in the paper.
  Then for each $x$ in the interior of $\england$, if $f(x) > 0$, we have
  \begin{equation}
    \label{eq:bias-technical-lemma}
    |\ee_X \hat m(x,t) - m(x,t) | \leq \frac{2\pi \mu_2(K) \hnorm{f}_\infty \hnorm{\nabla_x m}_\infty
      c_2^2}{c_1^2 f(x)}\left[ h + \littleop(h) \right],
  \end{equation}
  and
  \begin{equation}
    \label{eq:var-technical-lemma}
    \varX{\hat m(x,t)} \leq \frac{\hnorm{\nu}_\infty}{Lh^2} \left[  \frac{c_2^4}{c_1^4 f^2(x)} + \littleop(1)\right]
  \end{equation}
  as $L \rainf, h \raz$ such that $Lh^2 \rainf$, where the remainder terms are uniform in $t$.
  \begin{proof}
    Without loss of generality,
    assume that $K$ is renormalized such that $\int_0^\infty K(s) s ds = (2\pi)^{-1}$,
    and let $\tilde K_h: \bR^2 \rightarrow [0,\infty)$ be defined by
    $\tilde K_h(x) =  K( \rpnorm{x} / h )/h^2 = K_h(\rpnorm{x})$ for $h > 0$. Notice that $\tilde K_h$ is a
    valid density function on $\bR^2$ for any $h > 0$, and that it is an approximate identity as $h \raz$.

    We first give a technical result that will be useful, and whose proof follows from standard arguments: for any $\alpha, \beta \geq 0$,
    \begin{equation}
      \int_\england K_h^{1+ \alpha}\left(  \rpnorm{x-y}  \right) \rpnorm{x-y}^\beta f(y) dy \leq 2\pi \mu_{\beta +1}(K)
      \hnorm{K}_\infty^\alpha \hnorm{f}_\infty \cdot h^{\beta - 2\alpha}.
      \label{eq:bound-multivariate-moments-of-kernel}
    \end{equation}
    Recall that $\hat m(x,t) = \left[ \sum_{l = 1}^L K_h(\graphdist(x,X_l)) \right]^{-1} \sum_{l = 1}^L
    K_h(\graphdist(x,X_l))Z_l(t)$.
    First, notice that
    \[
      L^{-1} \left[ \sum_{l = 1}^L K_h(\graphdist(x,X_l)) \right] = \int_{\england} K_h( \graphdist(x,y) ) f(y)
      dy + \bigop \left( \left[ L^{-1} \int_{\england} K_h^2 (\graphdist(x,y)) f(y) dy \right]^{1/2} \right).
    \]
    By Condition~\ref{cond:kernel-and-graph-distance}  of the paper and
    \eqref{eq:bound-multivariate-moments-of-kernel}, the
    stochastic term is of order $\bigop(1/\sqrt{L h^2})$.
    Concerning the integral, since $K_h$ is an approximate identity as $h \raz$, approximation theory gives
    \begin{align*}
      \int_{\england} K_h( \graphdist(x,y))f(y) dy \geq c_2^{-2} \int_{\england} \tilde K_{h/c_2}( x-y )f(y)dy
      = c_2^{-2} f(x) + \littleo(1)
    \end{align*}
    as $h \raz$. Therefore, as $h \raz, L \rainf$,
    \begin{equation}
      \label{eq:bound-denominator}
      \left[L^{-1} \sum_{l = 1}^L K_h(\graphdist(x,X_l)) \right]^{-1} \leq \frac{c_2^2}{f(x)} + \littleop(1).
    \end{equation}
    Let us now look at the bias term. First, notice that $\ee_X \hat m(x,t) = \sum_{l = 1}^L w_l(x) m(X_l,
    t)$.
    Since $x \mapsto m(\cdot, t)$ is $C^1$, for all $x,y \in \england$, Taylor's theorem yields $m(y, t) =
    m(x,t) + r(x,y,t)$, where $|r(x,y,t)| \leq \hnorm{\nabla_x m}_\infty \rpnorm{x-y}$.  Therefore, using
    \eqref{eq:bound-denominator},
    \begin{align*}
      | \ee_X \hat m(x,t) - m(x,t)| &\leq
      \left[ \frac{c_2^2}{f(x)} + \littleop(1) \right] \hnorm{\nabla_x m}_\infty  \cdot \left[ L^{-1}\sum_{l=1}^L
            K_h( \graphdist(x,X_l) ) \rpnorm{x-X_l} \right]
    \end{align*}
    The second term in square brackets is now approximated:
    \begin{align*}
      L^{-1}\sum_{l=1}^L K_h( \graphdist(x,X_l) ) \rpnorm{x-X_l} &\leq c_1^{-2} L^{-1}\sum_{l=1}^L K_{h/c_1}(
      \rpnorm{x- X_l} ) \rpnorm{x-X_l}
 \\ &=
  c_1^{-2} \int_{\england} \tilde K_{h/c_1}( x - y ) \rpnorm{x-y} f(y)dy
  \\ & \qquad + c_1^{-2}\bigop \left( \left[ L^{-1} \int_{\england} K_{h/c_1}^2 ( \rpnorm{x-y}) \rpnorm{x-y}^2 f(y) dy \right]^{1/2} \right)
  \\ & \leq c_1^{-2} 2\pi \mu_{2}(K) \hnorm{f}_\infty \cdot h + \bigop(1/\sqrt{L}).
    \end{align*}
    Combining these results with \eqref{eq:bound-denominator} yields the conditional bias term \eqref{eq:bias-technical-lemma}.

    Concerning the variance, we have
    \begin{align*}
      \varX{\hat m(x,t)} &= \left[ L^{-1 } \sum_{l = 1}^L K_h(\graphdist(x,X_l)) \right]^{-2} \cdot
      L^{-1}\left[
        \frac{1}{L} \sum_{l = 1}^L K_h^2(\graphdist(x,X_l))\nu(X_l,t)\right]
      \\  &\leq \left[ \frac{c_2^4}{f^2(x)} + \littleop(1) \right] \cdot
      \hnorm{\nu}_\infty L^{-1} c_1^{-4}
      \left[ \frac{1}{L} \sum_{l = 1}^L K_{h/c_1}^2(\rpnorm{x-X_l})\right],
    \end{align*}
    Where we have used \eqref{eq:bound-denominator}. For the term in the second square brackets, we have
    \begin{align*}
      \left[ \frac{1}{L} \sum_{l = 1}^L K_h^2( \rpnorm{x-X_l})\right] &= \int_\england K_h^2(
      \rpnorm{x-y})f(y)dy + \bigop\left( \left[ \frac{1}{L} \int_{\england} K_h^4( \rpnorm{x-y}) f(y)
          dy \right]^{1/2} \right)
      \\ &\leq \hnorm{K}_\infty \hnorm{f}_\infty h^{-2}  + \bigop(1/\sqrt{L h^6}),
    \end{align*}
    where we have used \eqref{eq:bound-multivariate-moments-of-kernel}. Combining these results yields the
    conditional variance bound \eqref{eq:var-technical-lemma}.
  \end{proof}
\end{lma}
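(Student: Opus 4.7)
My approach will reduce the graph-distance kernel smoother to a standard Euclidean Nadaraya--Watson analysis using the metric equivalence $c_1 \rpnorm{x-y} \leq \graphdist(x,y) \leq c_2 \rpnorm{x-y}$ from Condition~\ref{cond:kernel-and-graph-distance}(\ref{cond:graphdist-equiv-to-euclidean}). Since the kernel $K$ is decreasing (Condition~\ref{cond:kernel-and-graph-distance}(\ref{cond:kernel})), monotonicity gives the two-sided sandwich
\[
K_h(c_2\rpnorm{x-y}) \,\leq\, K_h(\graphdist(x,y)) \,\leq\, K_h(c_1\rpnorm{x-y}),
\]
which lets me replace every graph-distance kernel sum by a Euclidean kernel sum at rescaled bandwidth $h/c_1$ or $h/c_2$, depending on whether an upper or lower bound is needed.

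First I would renormalize $K$ so that $\int_0^\infty s K(s)\, ds = (2\pi)^{-1}$, making $\tilde K_h(z) := K_h(\rpnorm{z})$ a density on $\bR^2$ that forms an approximate identity as $h \raz$. Setting $\mu_k(K) := \int_0^\infty s^k K(s)\, ds$, a polar-coordinate calculation using $\hnorm{f}_\infty < \infty$ gives the workhorse estimate
\[
\int_\england K_h^{1+\alpha}(\rpnorm{x-y}) \rpnorm{x-y}^\beta f(y)\, dy \,\leq\, 2\pi\,\mu_{\beta+1}(K)\, \hnorm{K}_\infty^\alpha\, \hnorm{f}_\infty\, h^{\beta - 2\alpha}, \qquad \alpha,\beta \geq 0.
\]
I then handle the denominator $D_L(x) := L^{-1}\sum_l K_h(\graphdist(x,X_l))$ by an LLN argument: its expectation is at least $c_2^{-2}\int_\england \tilde K_{h/c_2}(x-y) f(y)\, dy$, which by the approximate-identity property tends to $c_2^{-2} f(x)$ at interior points where $f$ is continuous; the fluctuation is $\bigop((Lh^2)^{-1/2})$ via the workhorse with $\alpha=1,\beta=0$. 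This yields $D_L(x)^{-1} \leq c_2^2/f(x) + \littleop(1)$.

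For the bias, decompose $\ee_X \hat m(x,t) - m(x,t) = \sum_l w_l(x)[m(X_l,t) - m(x,t)]$ and invoke the $C^1$ hypothesis to get $|m(X_l,t) - m(x,t)| \leq \hnorm{\nabla_x m}_\infty \rpnorm{x-X_l}$ uniformly in $t$. Combined with the $D_L(x)^{-1}$ bound, the problem reduces to estimating $L^{-1}\sum_l K_h(\graphdist(x,X_l))\rpnorm{x-X_l}$; the upper sandwich replaces this by $c_1^{-2}$ times the Euclidean analogue at bandwidth $h/c_1$, whose mean is $c_1^{-2}\cdot 2\pi\mu_2(K)\hnorm{f}_\infty\, h$ by the workhorse with $\alpha=0,\beta=1$, and whose stochastic fluctuation is $\bigop(L^{-1/2})$. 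Multiplying the three pieces produces the stated leading constant $2\pi\mu_2(K)\hnorm{f}_\infty\hnorm{\nabla_x m}_\infty c_2^2/(c_1^2 f(x))$. For the variance, I would write $\varX{\hat m(x,t)} \leq \hnorm{\nu}_\infty\, D_L(x)^{-2}\, L^{-1} \cdot L^{-1}\sum_l K_h^2(\graphdist(x,X_l))$; the upper sandwich and the workhorse with $\alpha=1,\beta=0$ bound the last sum by $c_1^{-4}\hnorm{K}_\infty\hnorm{f}_\infty h^{-2} + \bigop((Lh^6)^{-1/2})$, while $D_L(x)^{-2}$ contributes $c_2^4/f^2(x) + \littleop(1)$, giving the claimed $(Lh^2)^{-1}$ rate.

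The main obstacle will be keeping the direction of the metric-equivalence sandwich straight throughout: lower-bounding the denominator requires $K_h(\graphdist) \geq K_h(c_2\rpnorm{\cdot})$ (producing $c_2^2$ in $D_L(x)^{-1}$), whereas upper-bounding the nonnegative numerator sums uses $K_h(\graphdist) \leq K_h(c_1\rpnorm{\cdot})$ (producing $c_1^{-2}$ per sum), which is what gives rise to the asymmetric final factors $c_2^2/c_1^2$ in the bias and $c_2^4/c_1^4$ in the variance. Uniformity in $t$ requires no separate argument, since the bound on $|m(X_l,t)-m(x,t)|$ is $t$-free after invoking $\hnorm{\nabla_x m}_\infty$, the variance factor $\nu(X_l,t)$ is absorbed into $\hnorm{\nu}_\infty$, and every remaining stochastic term depends only on $X_1,\ldots,X_L$ and the kernel.
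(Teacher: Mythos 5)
Your proposal is correct and follows essentially the same route as the paper's own proof: the same renormalization of $K$, the same polar-coordinate ``workhorse'' bound (the paper's \eqref{eq:bound-multivariate-moments-of-kernel}), the same lower bound on the kernel-sum denominator via the approximate-identity property yielding \eqref{eq:bound-denominator}, and the same Taylor/sandwich decompositions for the bias and variance, with the metric-equivalence inequalities applied in exactly the directions the paper uses. Nothing further is needed.
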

The following Lemma gives the approximation error in using the sample total variance in place of the true
variance in the estimator of the mean field \eqref{eq:estimated-mean-field} in the
paper. Let $\sigma^2(x) = \ee
\hnorm{\vep(x)}^2$.
\begin{lma}
  Assume
  \begin{equation}
    \label{eq:vep-norm-4-bounded}
    \sup_{x \in \england} \ee \hnorm{\vep(x)}^4 < \infty,
  \end{equation}
  \begin{equation}
    \label{eq:obs-per-loc-asymptotically-of-order-n}
    c' n \leq n_l \leq C'n, \:l=1,\ldots, L, \quad \text{ for some constants } c', C', \text{ as } n \rainf,
  \end{equation}
  \begin{equation}
    \label{eq:inf-total-variance-positive}
    \inf_{x \in \england} \sigma^2(x) > 0 \qquad \& \qquad \sup_{x \in \england} \sigma^2(x) < \infty.
  \end{equation}
  Let $\hat m$ be defined as in \eqref{eq:estimated-mean-field} and let $\check{m}(x) = \sum_{l=1}^L
  \lambda_l(x) \overline Y_l$, where
  \[
    \lambda_l(x) = \tilde \lambda_l(x)/ \sum_{l=1}^L \tilde \lambda_l(x) \qquad \& \qquad  \tilde \lambda_l(x) =
    n_l K_h( \graphdist(x,X_l) )/\sigma^2(X_l).
  \]
  Then, for fixed $L, h$,
  \[
    \left| \hat m(x, t) - \check m(x, t) \right| \leq O_p(n^{-1/2}) \max_{l = 1, \ldots, L} |\overline
    Y_l(t)|, \quad \text{as } n \rainf.
  \]
  \begin{proof}
    First, notice that $| \hat m(x, t) - \check m(x, t) | \leq \max_{l = 1, \ldots, L} |\overline Y_l(t)|
    \left| \sum_{l} w_l(x) - \lambda_l(x) \right|$. For the rest of the proof, will drop the $x$ to
    simplify notation, and write $w_l$ instead of $w_l(x)$.
    Notice that
    \begin{equation*}
    \left| \sum_{l} w_l - \lambda_l \right| \leq
    \frac{\sum_l  |\tilde \lambda_l - \tilde w_l|  }{s_{\tilde \lambda}} +
    \frac{ | s_{\tilde w} - s_{\tilde \lambda}| }{s_{\tilde \lambda} },
    \end{equation*}
    where $s_{\tilde w} = \sum_{l} \tilde w_l$ and $s_{\tilde \lambda} = \sum_{l} \tilde \lambda_l$.
    Using \eqref{eq:obs-per-loc-asymptotically-of-order-n}, we get that
    \begin{equation}
    \left| \sum_{l} w_l - \lambda_l \right| \leq \left( C'/c' \right)^2 \frac{\sum_l  |\breve \lambda_l - \breve w_l|  }{s_{\breve \lambda} } +
    \left( C'/c' \right)^2 \frac{ |
        s_{\breve w} - s_{\breve \lambda}| }{s_{\breve \lambda} },
      \label{eq:bound-on-weight-difference}
    \end{equation}
    where the `` $\breve \cdot$ '' entries are the same as the `` $\tilde \cdot$ '' entries, but without the
    $n_l$s, i.e.\ $\breve w_l = K_h( \graphdist(x,X_l) )/ \hat \sigma^2(X_l)$, and $\breve \lambda_l = K_h(
    \graphdist(x,X_l) )/ \sigma^2(X_l)$.
    Using \eqref{eq:vep-norm-4-bounded} and the delta method, we have
    \[
      \breve \lambda_l - \breve w_l = K_h( \graphdist(x,X_l) ) \cdot \bigop(n^{-1/2}).
    \]
    The first summand in \eqref{eq:bound-on-weight-difference} is now bounded:
    \begin{align*}
    \frac{\sum_l |\breve \lambda_l - \breve w_l|   }{s_{\breve \lambda} }
        & \leq \frac{\bigop(n^{-1/2}) \sum_l K_h( \graphdist(x,X_l) )}{ \sum_l K_h( \graphdist(x, X_l)
      )/\sigma^2(X_l) }
    \\ &= \bigop(n^{-1/2}),
    \end{align*}
    where we have used \eqref{eq:inf-total-variance-positive}.
Using the same arguments, we get the same bound on the second summand of \eqref{eq:bound-on-weight-difference},
    \begin{align*}
    \frac{| s_{\breve w} - s_{\breve \lambda}| }{s_{\breve \lambda}}
    & \leq  \bigop(n^{-1/2}).
    \end{align*}
    The proof is finished by combining these results.
  \end{proof}
\end{lma}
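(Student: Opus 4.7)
My plan is to bound the difference $|\hat m(x,t) - \check m(x,t)|$ by factoring it through the difference of the weights. Since both $\hat m$ and $\check m$ are convex combinations of the same $\overline Y_l(t)$, we have
\[
|\hat m(x,t) - \check m(x,t)| = \Bigl| \sum_{l=1}^L (w_l(x) - \lambda_l(x)) \overline Y_l(t) \Bigr| \leq \max_{l} |\overline Y_l(t)| \cdot \sum_{l} |w_l(x) - \lambda_l(x)|,
\]
so the problem reduces to showing $\sum_l |w_l(x) - \lambda_l(x)| = O_p(n^{-1/2})$ for fixed $L,h$.

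Next, using the common-denominator identity already foreshadowed in the statement, I would write
\[
\sum_l |w_l - \lambda_l| \leq \frac{\sum_l |\tilde w_l - \tilde \lambda_l|}{s_{\tilde \lambda}} + \frac{|s_{\tilde w} - s_{\tilde \lambda}|}{s_{\tilde \lambda}},
\]
and then invoke the two-sided bound $c' n \le n_l \le C' n$ from \eqref{eq:obs-per-loc-asymptotically-of-order-n} to pull the common factor $n$ out of numerator and denominator, reducing the problem to the analogous $\breve{}$-quantities $\breve w_l = K_h(\graphdist(x,X_l))/\hat\sigma^2(X_l)$ and $\breve \lambda_l = K_h(\graphdist(x,X_l))/\sigma^2(X_l)$ (the constants $(C'/c')^2$ appearing).

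The analytic heart of the argument is then to show $\breve \lambda_l - \breve w_l = K_h(\graphdist(x,X_l)) \cdot O_p(n^{-1/2})$. I would get this from $\hat\sigma^2(X_l) - \sigma^2(X_l) = O_p(n^{-1/2})$, which follows from a pointwise CLT for the $\overline{\|Y_{lj} - \overline Y_l\|^2}$-type average under the fourth-moment assumption \eqref{eq:vep-norm-4-bounded}, combined with the delta method applied to the map $s \mapsto 1/s$ (smooth and bounded on a neighborhood of $\sigma^2(X_l)$ since $\inf_x \sigma^2(x) > 0$ by \eqref{eq:inf-total-variance-positive}). The upper bound $\sup_x \sigma^2(x) < \infty$ ensures the delta-method constant is uniform in $l$, so the $O_p(n^{-1/2})$ can be pulled out of the sum over the fixed number $L$ of locations.

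Finally, both remainder terms are bounded by the same device: for the first, write
\[
\frac{\sum_l |\breve w_l - \breve \lambda_l|}{s_{\breve \lambda}} \leq \frac{O_p(n^{-1/2}) \sum_l K_h(\graphdist(x,X_l))}{\sum_l K_h(\graphdist(x,X_l))/\sigma^2(X_l)} \leq O_p(n^{-1/2}) \cdot \sup_x \sigma^2(x) = O_p(n^{-1/2}),
\]
using $\inf_x \sigma^2(x) > 0$, and similarly for $|s_{\breve w} - s_{\breve \lambda}|/s_{\breve \lambda}$. The main (minor) obstacle is bookkeeping the uniformity of the $O_p(n^{-1/2})$ across $l=1,\dots,L$; since $L$ and $h$ are fixed and the constants in the delta-method bound depend only on $\inf_x\sigma^2$ and $\sup_x \ee\|\vep(x)\|^4$, a union bound over the finitely many $l$ suffices. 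Combining these bounds with the initial factorization yields the claimed rate.
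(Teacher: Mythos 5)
Your proposal is correct and follows essentially the same route as the paper's own proof: the same factorization through $\sum_l |w_l(x)-\lambda_l(x)|$, the same triangle-inequality decomposition with denominator $s_{\tilde\lambda}$, the same reduction to the breve quantities via $c'n \le n_l \le C'n$, and the same CLT-plus-delta-method argument for $\hat\sigma^2(X_l)-\sigma^2(X_l) = O_p(n^{-1/2})$. The only difference is that you make explicit what the paper leaves implicit, namely the role of $\inf_x \sigma^2(x)>0$ in the delta method for $s\mapsto 1/s$ and the union bound over the finitely many locations to get uniformity of the $O_p(n^{-1/2})$ terms.
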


\section{Preprocesing of the British National Corpus Data} \label{sec:data}
 We describe here in further detail the preprocessing of the sound data extracted from the spoken part of the British National Corpus and analyzed in the paper. 

  \subsection{Raw Data Preprocessing}
  First all the segmentation information and all the contextual
  information were  extracted. Then, the list of words for the segmentation and the context were corrected for coding
  differences (e.g.\ ``they'll'' was coded as two separate words ``they'' and ``'ll'' in the contextual
  information files). After this, the segmentation and contextual information were merged together. This was
  done by matching---within each audio recording file---consecutive groups of words. The algorithm we used
  looked for a unique sequence of words of length $L$ that perfectly matched between the two sets of words.
  The algorithm looped through the sequence of utterances (sequence of words pronounce by the same speaker)
  defined in the contextual XML files, by initially setting $L$ to the minimum of the length of the utterance
  and $50$ (this was chosen for speeding up the matching). If multiple matches were found, $L$ was increased and the search was performed
  again. If no match was found, $L$ was decreased and the
  search was performed again. If the algorithm didn't find any match, or if $L > 50$, the algorithm went to
  the next word in the current utterance (setting $L = 1$). Then $L$ was either increased, respectively
  decreased,  if multiple matches, respectively no match, was found. If $L > 50$, the algorithm was restarted
  with $L=1$ but the perfect matching was relaxed to approximate matching using the
  \emph{optimal string aligment} metric \citep{VanderLoo2014}, with distance at most 2.

  The result of the preprocessing is a data frame with variables \texttt{word}, \texttt{begintime},
  \texttt{endtime}, \texttt{textgridfilename}, \texttt{index}, \texttt{agegroup}, \texttt{role}, \texttt{sex},
  \texttt{soc}, \texttt{dialecttag}, \texttt{age}, \texttt{persname}, \texttt{occupation}, \texttt{dialect},
  \texttt{id}, \texttt{placename}, \texttt{activity}, \texttt{locale}, \texttt{wavfile},
  \texttt{placenamecleaned} and about 5 million observations (i.e.\ words).
  Discriminative information about the speaker is missing for about 2.9\% of the words, and information about
  the location of the recording is missing for about 8.4\% of the words.

%
%

  \subsection{Cleaning}

  Since the data we analyzed are sounds from noisy recording, we first cleaned the sounds corresponding to the
  set of words 
   \begin{equation}
      \texttt{class, glass, grass, past, last, brass, blast, ask, cast, fast,pass}.
      \label{eq:list-of-words-appendix}
    \end{equation}
  The following sounds were removed:
  \begin{enumerate}
    \item Sounds with duration outside the interval $[0.2, 1]$ seconds.
    \item 400 sounds with the lowest maximal amplitudes.
    \item Sounds corresponding to young speakers (selected by taking speakers less than 10 year old and whose
      median pitch was above a fixed threshold)
  \end{enumerate}
  To further remove low quality sounds from our analysis, we ranked the sounds $s_1, \ldots, s_N$, for each
  word $w$ in \eqref{eq:list-of-words-appendix}, according to following score,
  \begin{equation}
    \text{score}_i = \frac{1}{L_i} \sum_{l=1}^{L_i} \left( \check s_i(t_l) - \indic{[a(w), b(w)]}(t_l/t_{L_i}) \right)^2 \exp\left( - \indic{[a(w),
        b(w)]}(t_l/t_{L_i}) \right) ,
    \label{eq:score-cleaning}
  \end{equation}
  where $\check s_i(t_l) = \tilde s_i(t_l) / \max_{l=1,\ldots, L_i} \tilde s_i(t_l)$, $\tilde s_i$ is the
  root mean square amplitude (RMSA) of $s_i$ on a running window of $10$ milliseconds, and $a(w), b(w) \in [0,1]$
  were chosen by looking at the plot of $\tilde s_i$ for a sound of good quality, and correspond roughly to
  the location of the vowel in the sound. Large values
  of $\text{score}_i$ correspond to noisier sounds. The effect of
  the exponential factor in \eqref{eq:score-cleaning} is to give higher score to sounds having large RMSA
  outside the vowel interval, while still penalizing for low RMSA inside the vowel interval.
  For each word $w$ of our list of words, we then discarded the sounds with the largest $5\%$ scores.

  \subsection{Vowel Segmentation and MFCC Extraction}

  We extracted the MFCCs of all the sounds
  corresponding to the words in \eqref{eq:list-of-words-appendix}, using the software \texttt{ahocoder}
  (\url{http://aholab.ehu.es/ahocoder/index.html}) with parameter \verb+--CCORD=30 --LFRAME=16+.

  In order to extract  the MFCC corresponding to the vowel segment of the recording of the words in
  \eqref{eq:list-of-words-appendix}, we performed the following steps. For each word
  in \eqref{eq:list-of-words-appendix}:
  \begin{enumerate}
    \item align the MFCCs of the sounds of the word with respect to the first MFCC coefficient,
    \item find the segment of the warped sounds which corresponds to the vowel,
    \item extract the corresponding portion on the unwarped MFCCs,
    \item recompute all the unwarped  MFCCs on a common grid,
  \end{enumerate}

  \subsection{MFCC alignment}

  Let us describe more precisely the alignment step in the preprocessing procedure. Let $\mfcc{i}(t,m),  i = 1, \ldots, N$ denote the
  MFCCs of the sounds corresponding to the current word $w$. Recall that $m = 1, \ldots, M$, and assume that the time domains have been linearly
  rescaled, i.e.\ $t \in [0,1]$. We first align the curves $\mfcc{i}(\cdot, 1), i=1,
  \ldots, N$ using the Fisher-Rao metric. This yields warping functions $\gamma_i: [0,1] \rightarrow
  [0,1]$ such that $\mfcc{i}(\gamma_i(\cdot), 1), i=1, \ldots, N$ are aligned. Then we align all the MFCC
  coefficients of the sound $i$ using the warping $\gamma_i$, that is, we set $\widetilde{\mfcc{i}}(t,m) = \mfcc{i}(\gamma_i(t), m), t \in [0,1], m = 1,\ldots, M$ for all $i$.
  The idea is that, after alignment, the
  temporal location of the vowel would be the same accross all registered MFCCs of a same word, which would
  make the vowel segmentation much easier.

  Once the MFCCs corresponding to a common word $w$ have been aligned, the interval $[a(w), b(w)] \subset
  [0,1]$ corresponding to the vowel sound was found by manual auditory discrimination. The inverse of the
  warping functions
  were then used to compute the interval $I_i  = [\gamma_i^{-1}(a(w)), \gamma_i^{-1}(b(w))]$, which is the
  vowel interval of the $i$-th unaligned MFCCs. The interval $I_i$ was then linearly rescaled to $[0,1]$,
  yielding the vowel MFCCs
  \begin{equation}
    \mfcc{i}^\text{vowel}(t, m) = \mfcc{i}\Big( (1-t)\gamma_i^{-1}(a(w)) + t \gamma_i^{-1}(b(w)), m \Big), \quad t \in
    [0,1]
    \label{eq:vowel-mfcc}
  \end{equation}

\begin{figure}[p]
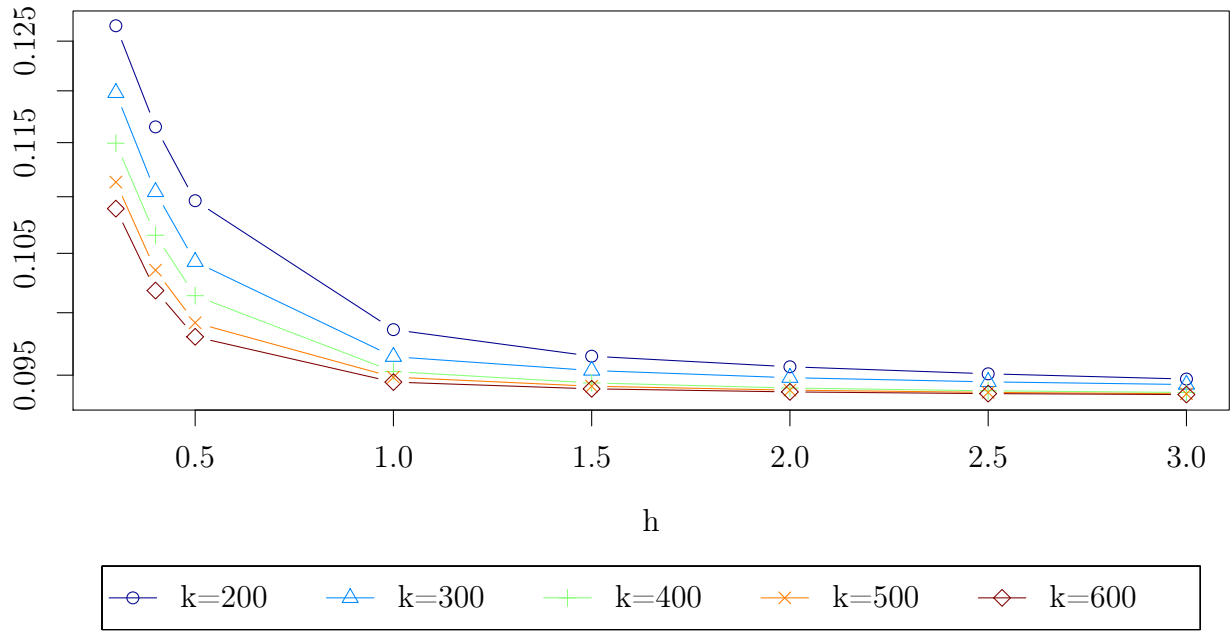

  \centering
   \input{./figures/cv-aast-mean-effective-new.tex}\\
  \input{./figures/cv-aast-cov-sqrt-effective-new.tex}\\
   \caption{Cross-validation curves of the ``class'' dataset for the mean MFCC field (top) and the
     $d_S$-covariance field (bottom)  when the bandwidth is adjusted using the $k$-th nearest observations.}
  \label{fig:aast-cv-obs}
  \thispagestyle{empty}
\end{figure}

\begin{figure}[p]
  \centering
   \includegraphics[page=1, width=\linewidth]{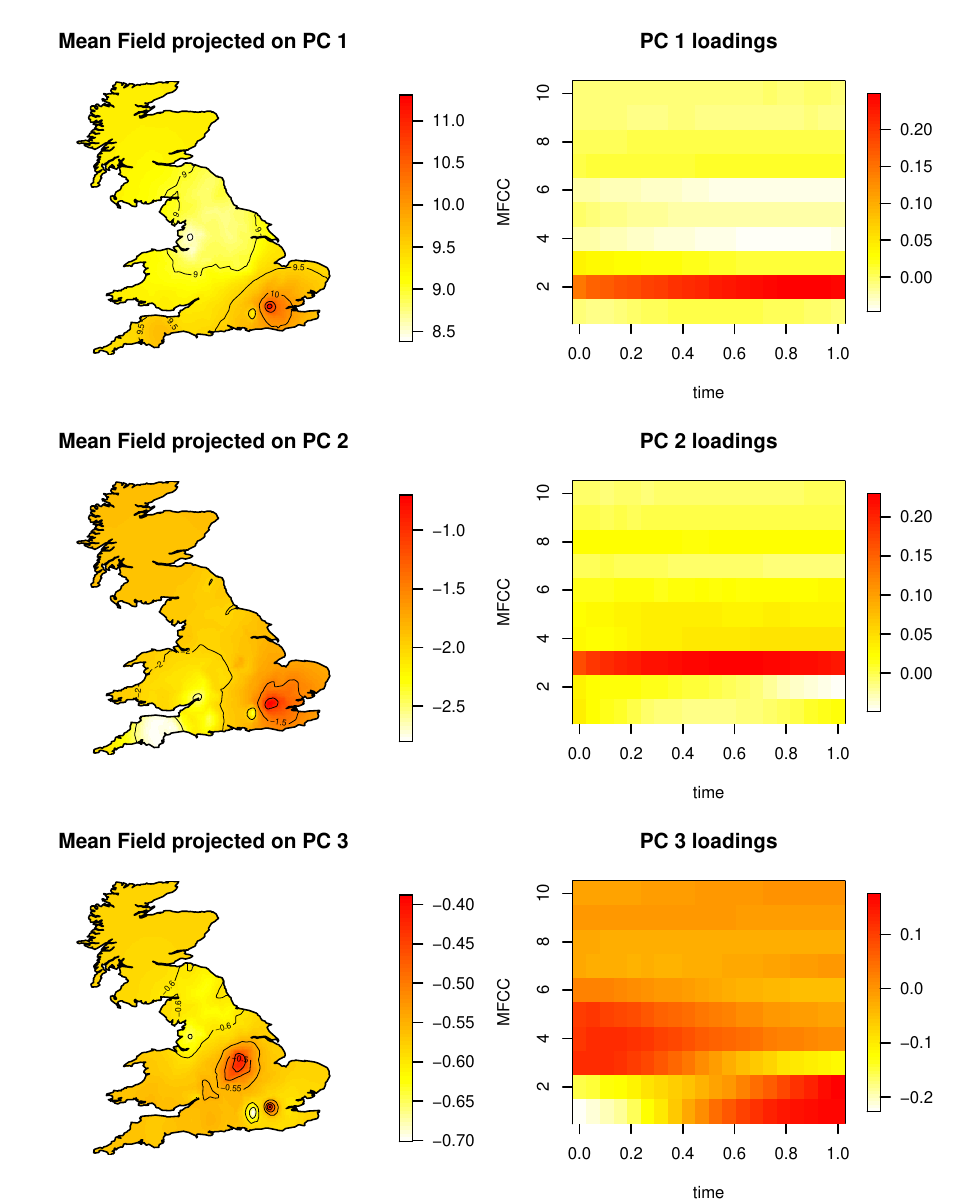}
  \caption{Left: Color maps with contours of the mean smooth MFCC field obtained for
    the ``class'' vowel with $h=1.5$ and $k=300$th nearest observations (denoted \emph{NO map} in the text), projected
    onto the first three principal components directions (from top to bottom) of the
    original data $\left\{Y_{lj}(t): l=1,\ldots, L; j=1,\ldots, n_l \right\}$.
  Right: Colour image representing the projection directions (loadings). }
  \label{fig:aast-mean-optim-nearest-obs}
\end{figure}

\begin{figure}[p]
  \vspace{-1cm}
  \begin{center}
    \hspace*{-1cm}\includegraphics[height=23cm]{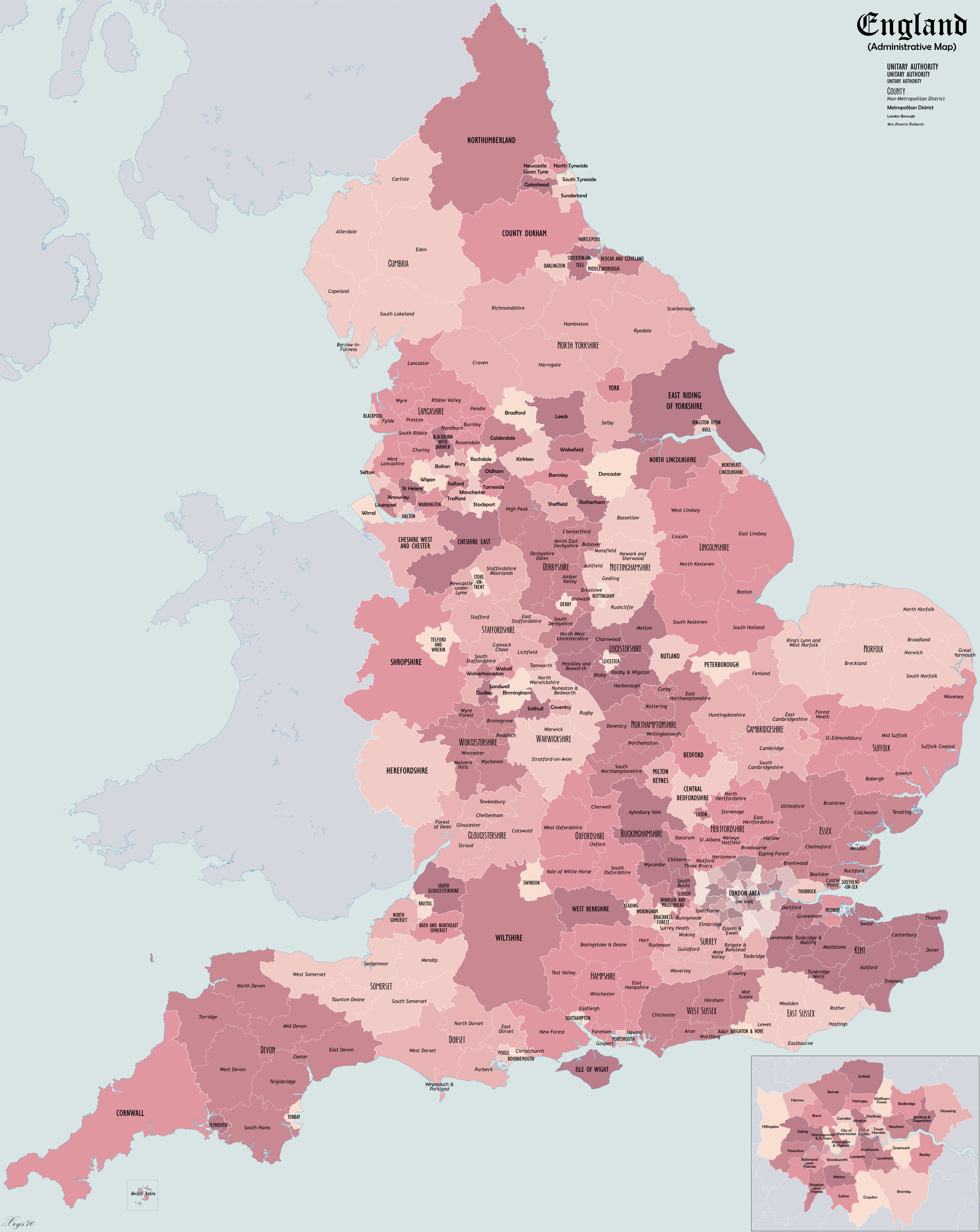}
  \end{center}
  \caption{Counties of England.
Licenced under the Creative Commons Attribution 3.0 Unported license. Attribution: XrysD.
\\ \url{https://en.wikipedia.org/wiki/File:England_Administrative_2010.png}.
}
  \label{fig:gb-counties}
\end{figure}

\begin{figure}[p]
  \begin{center}
    \includegraphics[height=18cm]{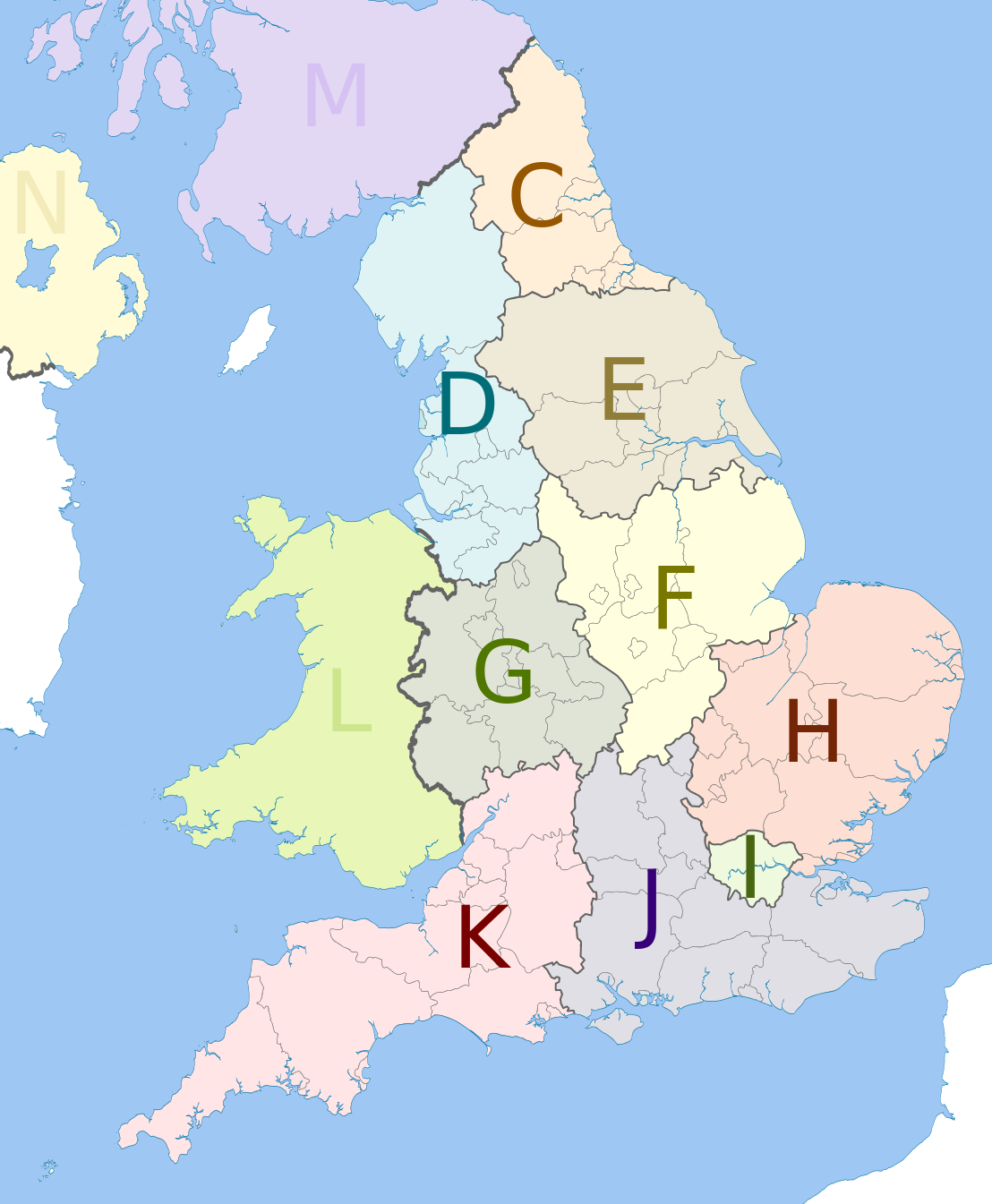}
  \end{center}
  \caption{Regions of Great Britain. C = North East England,
D = North West England,
E = Yorkshire and the Humber,
F = East Midlands,
G = West Midlands,
H = East of England,
I = Greater London,
J = South East England,
K = South West England,
L = Wales,
M = Scotland.
Licenced under the Creative Commons Attribution-Share Alike 3.0 Unported license. Attribution: Dr Greg and
Nilfanion. \\ \url{https://commons.wikimedia.org/wiki/File:NUTS_1_statistical_regions_of_England_map.svg}.
}
  \label{fig:gb-regions}
\end{figure}

\section{Modeling the Vowel Sound Duration}

The sound duration of the vowel in the words of the ``class'' dataset are believed to carry
part of the information of the spatial variation of the dialect sounds. However, since the duration
cannot capture time dynamics in relative volume, and differences in the vowel quality, the information
carried by the vowel duration is a very crude approximation of the vowel sound.  This is why the
focus of the paper is on the MFCCs of the vowel sounds.  We have nevertheless produced a spatial map  of
the relative duration of the vowel sound (relative to the duration of the word), where the spatial map
is obtained by spatial smoothing of the relative durations at each observation location, obtained using
a linear mixed model with \texttt{observation location}, \texttt{word} and \texttt{sex} as fixed
effects, and \texttt{speaker} as random effect. 
  The resulting map is given in
Figure~\ref{fig:sound-duration}, together with the projection of the mean MFCC field
onto the second
principal component.  The same spatial smoothing parameters have been used for both maps ($h=0.5, k=14$
nearest locations).  It can be seen that the two maps are quite correlated (the absolute correlation
is $0.66$; note that the principal component is defined up to a sign), and therefore the duration information is more or less similar to that obtained by the
projection of the MFCC mean field onto the second principal component.  

\begin{figure}[h]
  \begin{center}
    \includegraphics[width=\textwidth]{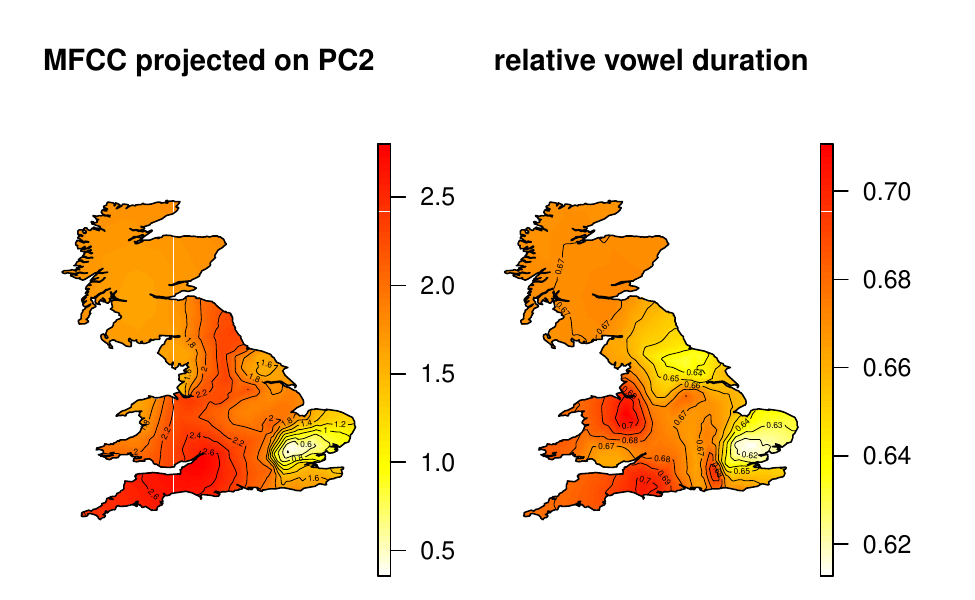}
  \end{center}
  \caption{Mean MFCC field projected on PC2 (left) and duration field of the vowel
  sounds (right). The absolute correlation between the two fields is $0.66$.}
  \label{fig:sound-duration}
\end{figure}

\begin{figure}[h]
  \begin{center}
    \includegraphics[width=\textwidth]{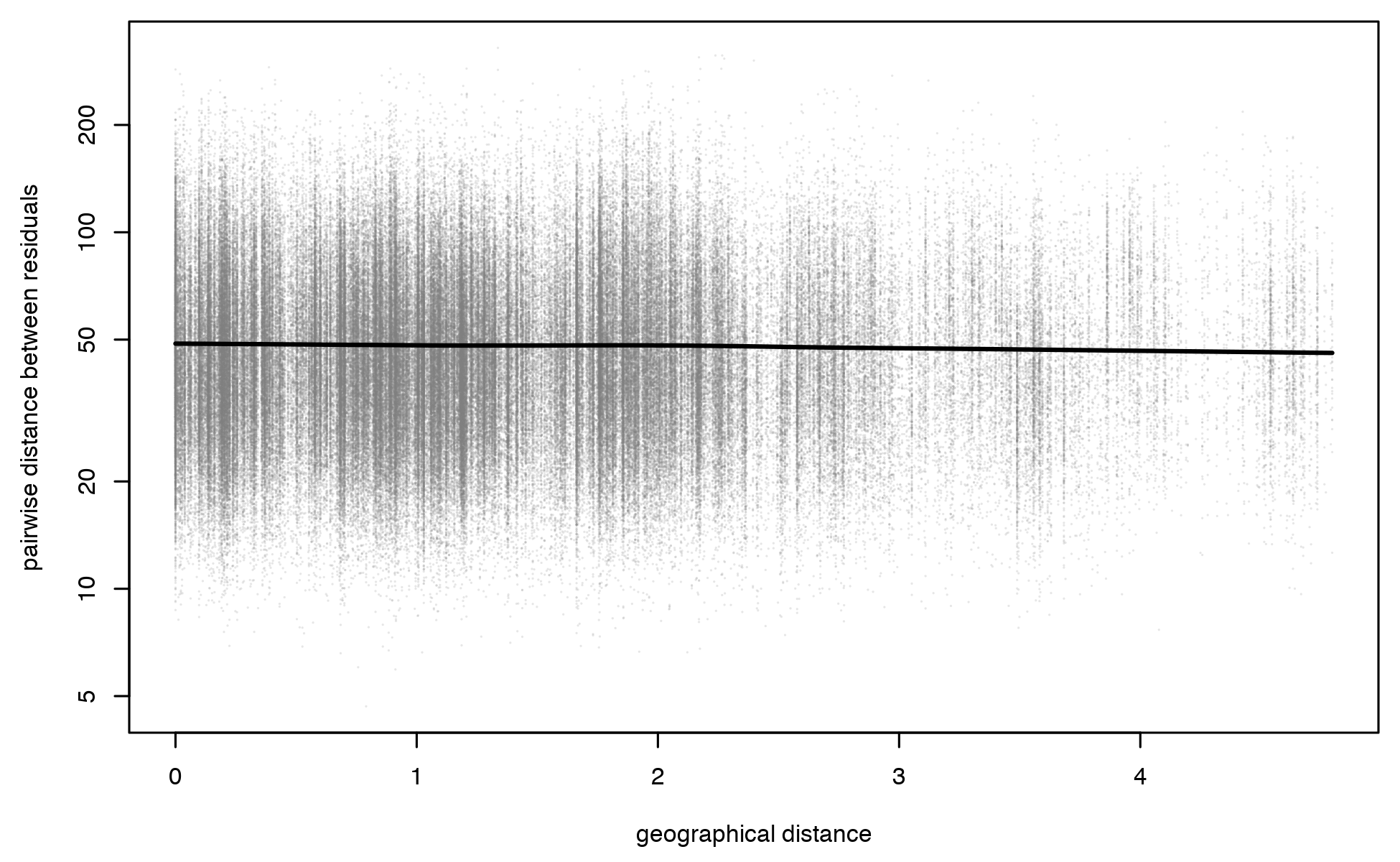}
  \end{center}
  \caption{Scatterplot of pairwise distance between residuals (y axis) against their geographical distance (x
    axis). The black thick line is a robust local linear regression obtained via the
  R function \texttt{lowess}.}
  \label{fig:scatterplot}
\end{figure}

\begin{figure}[h]
  \begin{center}
    \includegraphics[width=\textwidth, trim=0 200 0 200, clip=true]{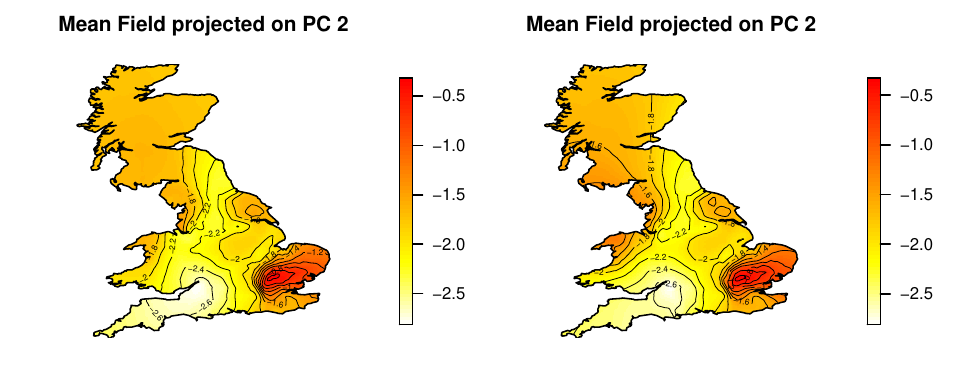}
  \end{center}
  \caption{Mean field of the BNC dataset projected on PC2 with computed with geodesic distance (left) and with
    Euclidean distance (right).  Notice the artifacts near the boundaries (the level curves go across the port
    of Edinburgh when using the Euclidean metric).
  }
  \label{fig:geodesic-vs-euclidean}
\end{figure}

\begin{figure}[h]
  \begin{center}
    \includegraphics{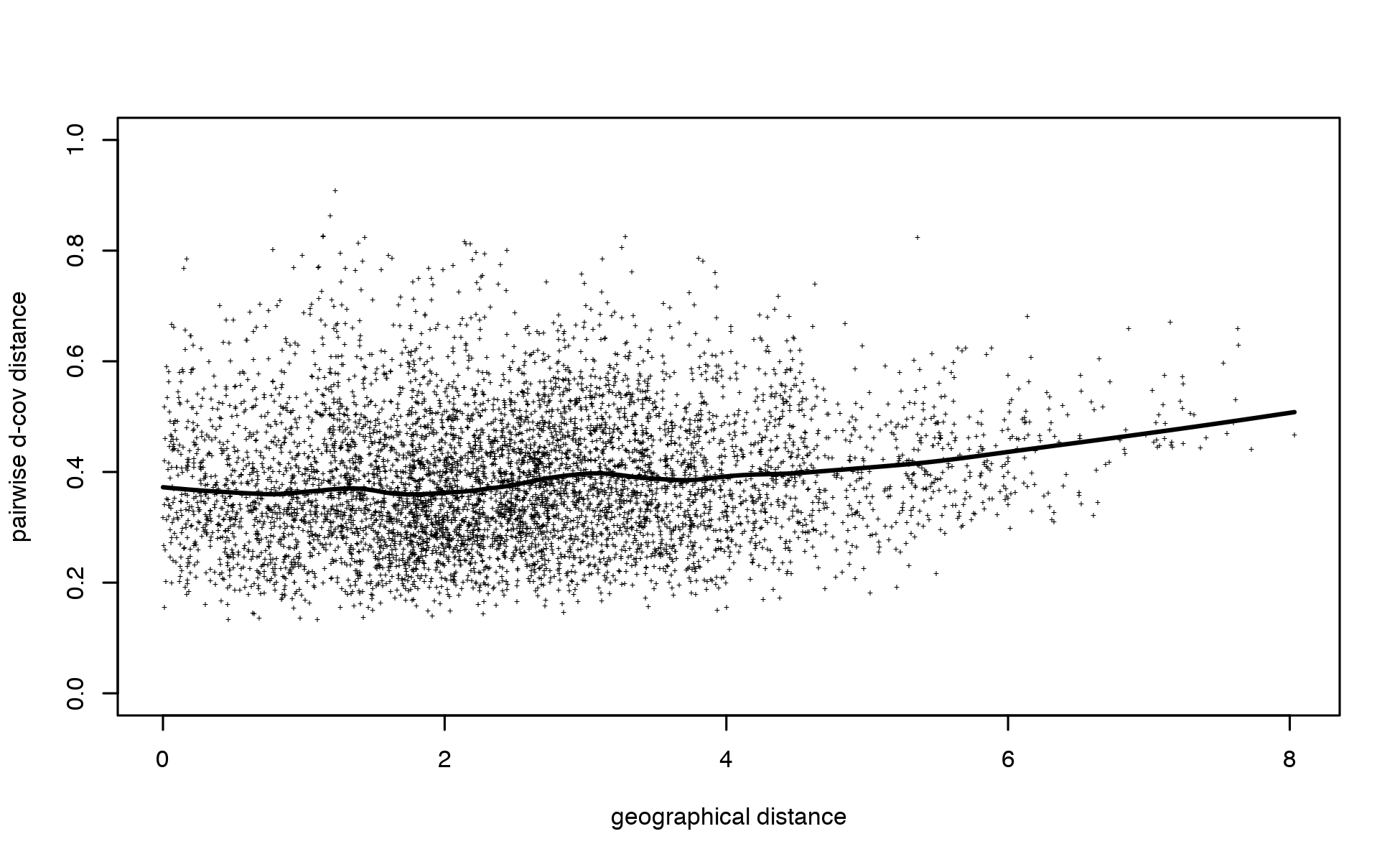}
  \end{center}
  \caption{Scatterplot of the distances between the raw $d_S$-covariances $\breve \Omega_l, \breve \Omega_k$,
    and the corresponding geographical distance between $X_l, X_k$. Notice that the thick line, which
    represents a robust local linear regression obtained via the R function \texttt{lowess}, has a nugget, and
    is slightly increasing with the geographical distance.}
  \label{fig:dcov-dist-vs-geographical-dist}
\end{figure}

\section{Simulation Study}
\label{sec:simulation-study}

In order to quantify whether the spatial mean function and the spatial $d_S$-covariance contain valuable
spatial informations, we compare the results obtained in the paper with a simulation scenario in which
all the spatial locations have the same mean and $d_S$-covariance. 
We simulate observations from a model with constant mean and constant $d_S$-covariance,
\begin{equation}
  \label{eq:global-model}
  Y_{lj}^* = \mu + \vep_{lj}^*, \quad l=1,\ldots, L; j=1,\ldots, n_l
\end{equation}
where $\mu = \left( \sum_{l,j} Y_{lj} \right) / \sum_l n_l$, $\vep_{lj}^*$ were drawn with
replacement from $\left\{ \check \vep_{lj}: l=1,\ldots, L; j=1, \ldots, n_l
\right\}$, $\check \vep_{lj} = \hat \vep_{lj} - \left( \sum_{l,j} \hat \vep_{lj}
\right)/\sum_l n_l$, $\hat \vep_{lj} = Y_{lj} - \hat m(X_l)$, and where $\hat m$ is
the estimated of the mean MFCC field obtained from the data with tuning parameters
$h=0.5, k=14$ nearest locations, and $n_l$ is the number of observations at location
$X_l$.

Notice that although the simulated data is generated under a constant mean model,
their estimated $d_S$-covariance field will be the same as what would be obtained by
a model with varying mean, i.e.\ replacing $\mu$ by $\hat m(X_l)$ in
\eqref{eq:global-model}. Indeed, the $d_S$-covariance field is based on the spatial
smoothing of the sample $d_S$-covariance at each location, defined by 
\begin{align}
  \label{eq:sample-ds-cov}
  \breve \Omega_l^*(t) = \left[ \frac{1}{n_l} \sum_{j=1}^{n_l} \sqrt{ (Y^*_{lj}(t) -
  \overline Y^*_l(t)) (Y^*_{lj}(t) - \overline Y^*_l(t))^\tp } \right]^2,
\end{align}
where $\overline Y^*_l = \sum_j Y^*_{lj}/n_l$. Changing $\mu$ in
\eqref{eq:global-model} to $\hat m(X_l)$ would not change
\eqref{eq:sample-ds-cov}, since
\begin{align*}
  Y_{lj}^* - \overline Y_l^* &= (\mu + \vep^*_{lj}) - \sum_{i} (\mu + \vep^*_{li})/n_l
  \\  &= \vep^*_{lj} - \sum_{i} \vep^*_{li}/n_l
  \\  &= \left( \hat m(X_l) + \vep^*_{lj} \right) - \sum_{i} (\hat m(X_l) + \vep^*_{li})/n_l.
\end{align*}
The $d_S$-covariance field estimated in each simulations run is therefore
the same, regardless of the choice of the mean at each location.

The projections onto PC1-3 are given in Figure~\ref{fig:global-model-mean}. If there was no spatial
information in the mean field of the BNC dataset, the mean field (projected onto PC1) of the simulated data would have the same
range of variation as the mean field of the BNC dataset (projected onto PC1). However, the MFCC field of the
estimated MFCC field of the simulation has consistently a much smaller range than the smooth field obtained
from the BNC dataset over the $100$ simulation replicates (the range for the
projection on PC1 is [9.2, 9.7]
for a realization from \eqref{eq:global-model}, as opposed to [7.9, 10.8] for the real data application). This
provides evidence in support of spatial structure for the mean field.

\begin{figure}[h]
  \begin{center}
    \includegraphics{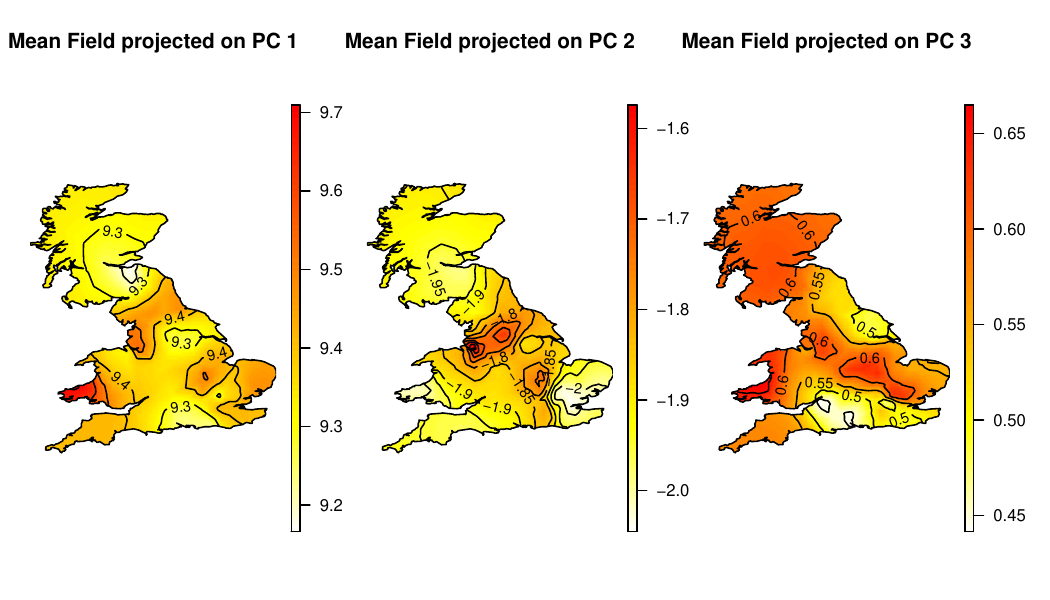}
  \end{center}
  \caption{Projection onto PCs 1,2,3 of the mean obtained from data simulated under the global model \eqref{eq:global-model}}
  \label{fig:global-model-mean}
\end{figure}

\section{An illustration of the advantage of the $d_S$-covariance}
\label{sec:example-d-cov}

As a motivation for the use of $d_S$-covariances, here is a one-dimensional example which illustrates the
advantages of using them when smoothing spatially under the metric $d_S$. Suppose you have data $Y_{11}, \ldots,
Y_{1m} \simiid \vep(x_1)$ and $Y_{21}, \ldots , Y_{2m} \simiid \vep(x_2)$, where $x_1, x_2 \in \bR$ are
two points that are equally close to $x_0 \in \bR$, and we wish to estimate the co-variation of $\vep(x_0)$.
Assume that $\vep(x) \sim N(0, \sigma^2)$ for all $x \in \bR$, and that the mean of $\vep(x)$ is known to be
equal to zero. If we wish to estimate the parameter $\sigma^2 = \var{\vep(x_0)}$, then a natural estimator
is the Fr\'echet mean of $\hat \sigma_i^2 = m^{-1} \sum_{j = 1}^m Y_{ij}^2, i=1,2,$ under $d_S$, i.e.\
\[ \hat \sigma^2_* = \left[ \left( \sqrt{\hat \sigma_1^2} + \sqrt{\hat \sigma_2^2} \right)/2 \right]^2. \]
But 
\begin{align*}
  \ee \hat \sigma^2_* &= \ee(\hat \sigma_1^2 + \hat \sigma_2^2)/4 + \ee\sqrt{\hat \sigma_1^2 \hat \sigma_2^2} / 2
  \\ &< \sigma^2/2 + \sqrt{\ee{\hat \sigma_1^2 \hat \sigma_2^2}}/2
  \\ &= \sigma^2/2 + \sqrt{\ee (\hat \sigma_1^2) \ee (\hat \sigma_2^2)}/2
  \\ &= \sigma^2,
\end{align*}
where we have used Jensen's inequality in the second line (which is in this case a strict inequality, since
$\hat \sigma_1^2 \hat \sigma_2^2$ is not almost surely constant), and the independence of $\hat \sigma_1^2$
and $\hat \sigma_2^2$ in the third line.
In other words, $\hat \sigma^2_*$ is a biased estimator of $\sigma^2$. Furthermore, since $\sqrt{\hat
  \sigma_1^2}$ and $\sqrt{\hat \sigma_2^2}$ are both Chi distributed with $m$ degrees of freedom, $\ee
\sqrt{\hat \sigma^2_*} = \sigma \sqrt{2} \Gamma\left( (m+1)/2 \right)/ \Gamma(m/2)$, where $\Gamma$ is the
Gamma function, $\Gamma(z) = \int _{0}^{\infty }x^{z-1}e^{-x}\,dx$, that is, $\sqrt{ \hat \sigma^2_* }$ is a
biased estimator of $\sqrt{\sigma^2}$. In other words, if one smooths the sample variances using the square-root
Euclidean metric, the resulting estimator is biased, even in the square-root space. However, if one wishes
to estimate the parameter $\tau = \cov_{d_S}(\vep(x_0)) = \left[ \ee{ |\vep(x_0)| } \right]^2$, then the natural estimator is
the Fr\'echet mean of \[ \hat \tau_i = \left( m^{-1} \sum_{j = 1}^m |Y_{ij}| \right)^2, \quad i=1,2\] under $d_S$, that is
\[ \hat \tau_* = \left[ (2m)^{-1} \sum_{j=1}^m \left( |Y_{1j}| + |Y_{2j}| \right)  \right]^2, \] which is
unbiased in the square-root space, i.e.\  $\ee \sqrt{\hat \tau_{*}} = \sqrt{\tau}$. In conclusion, using the
same metric for the spatial smoothing and the definition of the co-variation yields estimators that are less
biased than those obtained by using distinct metrics.

\subsection{Comparison of the $d$-covariance field under the square-root metric and the Euclidean metric}

One might raise the question of whether the $d_S$-covariance field yields results different from the
$d_E$-covariance field ($d_E$ being the Euclidean metric). In order to compare the $d_S$-covariance and
$d_E$-covariance fields visually, one could in principle use dimension reduction methods; however the
interpretation of projections of the $d_S$-covariance may be problematic, as discussed in
Section~\ref{sec:d-cov}.
An alternative way to represent the $d$-covariance variations is to consider a single location of interest
and plot the distances between the $d$-covariance at the
location of interest,  and the $d$-covariances at all other locations of the map. This produces $2D$
surfaces that reflect which parts of the country are more similar or dissimilar to the location of interest
with respect to $d$-covariance. Figure~\ref{fig:sqrt-vs-L2-cov-dist} shows an example of these distance
surface for the square-root and the Euclidean metric, where the distance between $d$-covariances has been
computed using the $d_S$ metric in both cases (averaged over the length of the sound), and the distances have
been renormalized to the interval $[0,1]$ to allow for fair comparison of the plots. The tuning parameters are
$h=1,k=32$ nearest locations.  Notice that the level curves are different. In particular, the
level curve $0.6$ for the square-root map goes down to Bristol, whereas it goes down to Dorset in the
Euclidean metric map. The level curve $0.8$ is also very different between the two maps. These differences can
be attributed to the swelling effect of the Euclidean metric \citep{Arsigny2007}.

\begin{figure}[h]
  \begin{center}
    \includegraphics{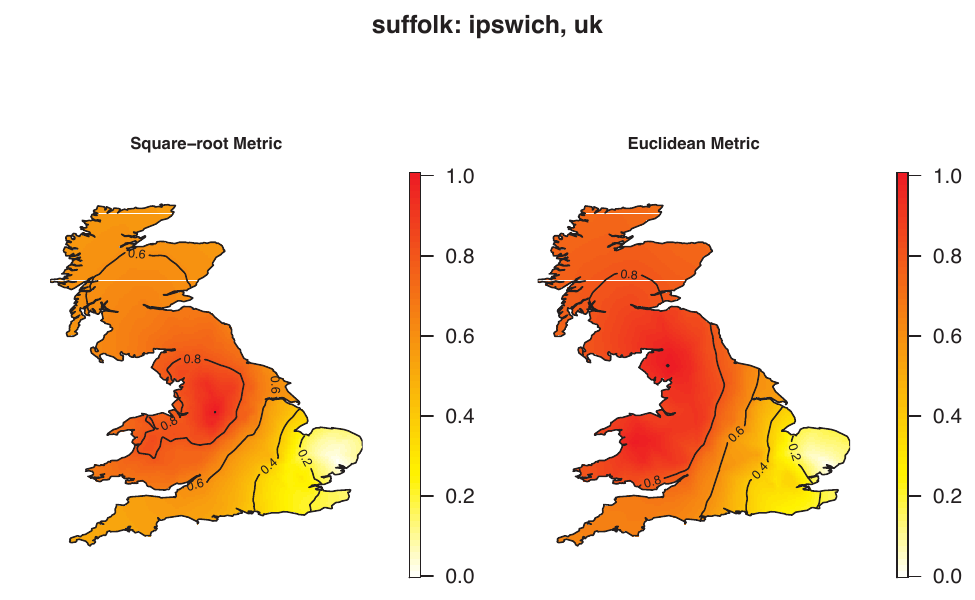}
  \end{center}
  \caption{Left: color map with contours of the pairwise distances between the $d_S$-covariance at Ipswich
    (Suffolk), and the $d_S$-covariance at other locations. Right: same color map, but for the
    $d_E$-covariance. The tuning parameters are $h=1, k=32$ nearest locations. The scale of each map has been
    renormalized so that the value $1$ is the maximal pairwise distance (under the metric $d_S$) in the
    $d$-covariance field, respectively for each metric.}
  \label{fig:sqrt-vs-L2-cov-dist}
\end{figure}


\end{document}